\newcommand{\bal}{\ensuremath{\mathsf{BALANCE}}\xspace} 
\newcommand{\val}{\mathsf{Val}}
\newcommand{\du}{\ensuremath{\dot{u}}\xspace}
\newcommand{\anhai}[1]{{\color{black} #1}}
\title{A  Variational-Calculus Approach to Online Algorithm Design and Analysis}
\author{Pan Xu}
\begin{abstract}
Factor-revealing linear programs (LPs) and policy-revealing LPs arise in various contexts of algorithm design and analysis. They are commonly used techniques for analyzing the performance of approximation and online algorithms, especially when direct performance evaluation is challenging. The main idea is to characterize the worst-case performance as a family of LPs parameterized by an integer \( n \ge 1 \), representing the size of the input instance. To obtain the best possible bounds on the target ratio (e.g., approximation or competitive ratios), we often need to determine the optimal objective value (and the corresponding optimal solution) of a family of LPs as \( n \to \infty \). One common method, called the \emph{Primal-Dual} approach, involves examining the constraint structure in the primal and dual programs, then developing feasible analytical solutions to both that achieve equal or nearly equal objective values. Another approach, known as \emph{strongly factor-revealing LPs}, similarly requires careful investigation of the constraint structure in the primal program, followed by devising new constraints to replace existing ones. In summary, both methods rely on \emph{instance-specific techniques}, which is difficult to generalize from one instance to another.

In this paper, we introduce a general variational-calculus approach that enables us to analytically study the optimal value and solution to a family of LPs as their size approaches infinity. The main idea is to first reformulate the LP in the limit, as its size grows infinitely large, as a variational-calculus instance and then apply existing methods, such as the Euler-Lagrange equation and Lagrange multipliers, to solve it. We demonstrate the power of our approach through three case studies of online optimization problems and anticipate broader applications of this method. We hope our work can serve as an introductory example of using continuous methodologies to address algorithm design and analysis for discrete optimization problems.
\end{abstract}
\begin{document}
\begin{titlepage}
\maketitle
\end{titlepage}

\newpage

\section{Introduction}
Variational calculus is a well-established field of mathematical analysis that aims to find an optimal function (or curve) that optimizes a given functional. Unlike traditional optimization, which seeks a specific point that optimizes a target function, variational calculus is designed to optimize an entire function (or curve) within a given function space. It has applications across various science and engineering disciplines. For example, in image processing, variational methods are used to remove noise and enhance clarity by minimizing energy functionals. In control theory, variational techniques help design an optimal control function to achieve a desired goal; for more examples, see the book~\cite{Cassel_2013}. A generic variational instance takes \anhai{the} form as follows:
\begin{align}\label{vc:ge}
\min_{u(t): t\in [a,b]} J[u]:=\int_a^b L(t,u,\du)~\sd t: ~~u(a)=\alp, u(b)=\beta,
\end{align}
where $t$ is called \emph{independent} variable, and $u$ is called \emph{dependent} variable that is assumed to be a function of $t \in [a,b]$, $L$ is a function of $t$, $u$ and $\du$,\footnote{Throughout this paper, we use $\du$ and $u'$ interchangeably, both denoting   the first derivative of $u$ with respect to $t$.} called Lagrangian function, which is assumed to have almost continuous second-order partial derivatives, and $J[u]$ is called objective functional. Putting into plain language, the variational instance \eqref{vc:ge} aims to identify a function $u(t)$ with $t \in [a,b]$ that satisfies boundary conditions $u(a)=\alp, u(b)=\beta$ and minimizes the functional $J(u)$.

\xhdr{Extensions of Variational Instances}. There is a \emph{vast} number of extensions to the generic form shown in~\eqref{vc:ge}. We present a few common examples as follows. (1) Second-Order (or higher-order) variational instances. In this case, the Lagrangian function typically takes the form \( L(t, u, \du, \ddot{u}) \), involving the second derivative of \( u \) with respect to the independent variable \( t \). (2) Variational instances with multiple dependent variables. Here, the Lagrangian function may take a form such as \( L(t, u, v, \du, \dot{v}) \), where both \( u \) and \( v \) are functions of \( t \). (3) Multi-dimensional variational instances. Consider a two-dimensional case, for example. The objective functional typically takes the form:
\[
J[u] := \iint_{\Omega} L(t, s, u, \partial u / \partial t, \partial u / \partial s) \, \sd s \, \sd t,
\]
where there are two independent variables, \( t \) and \( s \), and one dependent variable \( u \), assumed to be a function of both \( t \) and \( s \). (4) Non-boundary constraints. Consider the instance~\eqref{vc:ge} of the generic form as an example. Examples of such constraints include (i) integral constraints of the form \( \int_a^b F(t, u, u') \, \sd t = 0 \); (ii) holonomic constraints of the form \( G(t, u) = 0 \); and (iii) non-holonomic constraints of the form \( K(t, u, u') = 0 \). \emph{For each extension, numerous techniques have been developed to solve it; see~\cite{dacorogna2024introduction, olver2021calculus} for details.}

\smallskip

As a powerful mathematical tool, variational calculus has, however, seen rare applications in theoretical computer science, particularly in areas like algorithm design and analysis. One primary reason is that most optimization problems in computer science, such as NP-hard problems, involve \emph{discrete} decision variables and objectives. In contrast, variational calculus techniques are generally applied to find optimal \emph{continuous} functions within a given function space. Together, these fundamental differences help explain why variational calculus methods have limited applications in theoretical computer science.

In this paper, we address the gap described above by presenting examples that demonstrate how variational techniques can be applied to analyze (discrete) algorithms for online optimization problems. Specifically, we consider three representative cases: Online Matching (under adversarial order), Adwords, and the Secretary Problem, detailed as follows.
\subsection{Online Matching, Online $b$-Matching, and Secretary Problem} 
\xhdr{Online Matching}. Online (Bipartite) Matching was first introduced by~\cite{kvv}. In this problem, we have an (unweighted) bipartite graph $G=(U,V,E)$, which is unknown to the algorithm (\alg). The vertices of $U$ are static, while the vertices of $V$ are revealed (called \emph{arrive}) sequentially in an adversary order (pre-arranged by an adversary oblivious to the algorithm). Upon the arrival of a vertex $v$, \alg gets informed of all $v$'s neighbors, and it needs to match $v$ to one \anhai{of} its available neighbor $u$, if any, before observing any future arrivals from $V$, and the decision is irrevocable. By default, each $u \in U$ is assumed to have a unit matching capacity. The goal is to design an online matching algorithm \alg that maximizes the  expected total number of matches. Since the introduction of the aforementioned basic version, numerous related variants have been proposed and studied (\cite{mehta2012online}), including the two important models below.

\xhdr{Adwords}. As an important variant, \anhai{the} Adwords problem strictly generalizes Online Matching, bringing it closer to real-world Internet advertising. First, each vertex $u \in U$ is associated with a given budget $b_u>0$. Second, upon the arrival of an online vertex $v \in V$, an algorithm \alg gets informed not only \anhai{of} the set $N_v \subseteq U$ of $v$'s neighbors but also of the edge weight $b_{u,v}>0$     for each $u \in N_v$. This is called the \emph{bid} of $v$ for $u$. Specifically, suppose at the time of $v$'s arrival, each $u \in  N_v$ has a real-time budget $\tilde{b}_u$, which initially equals $b_u$ but decreases over time. The algorithm \alg can assign $v$ to any neighbor $u \in N_v$, gaining a reward of $\min \sbp{b_{u,v}, \tilde{b}_u}$ and having $u$'s budget reduced by the same amount. The goal is updated  to maximize the total expected rewards. We can verify that the aforementioned Online Matching can be cast as a special case of Adwords when all $u$ have a unit budget and all bids are equal to one.  

\xhdr{Online $b$-Matching}.  The online $b$-matching problems is a special case of the Adwords problem when all $u$ have a budget $b_u=b \in \mathbb{Z}_{\ge 1}$ and all bids are equal to one, with $b$ being a parameter accessible to the algorithm. Alternatively, it can \anhai{be} viewed as a generalized version of Online Matching such that each $u$ has a uniform matching capacity of $b$.

\xhdr{Secretary Problem}. The Secretary Problem, also known as the ``optimal stopping problem,'' is a classical problem in decision theory, Operations Research, and Computer Science. The basic setting of the problem is as follows: Suppose we need to hire one secretary from a pool of $n$ candidates who arrive in a random order. The challenge is that before interviewing the next candidate, we must make an irreversible decision to either accept the current one (and stop permanently) or reject it. Our goal is to maximize the probability of selecting the best candidate based on a sequence of decisions.  For this basic setting, a well-known optimal $1/\sfe$-competitive strategy is to first interview a batch of $n/\sfe$ candidates, reject all of them, and then accept the first candidate who beats the best candidate from the initial batch.

Since the introduction of the Secretary Problem, numerous generalizations of the basic setting have been proposed; see, \eg~\cite{oveis2013variants,kleinberg2005multiple} Among the various variants of the Secretary Problem, notable examples include the $J$-choice, $K$-best secretary problem, also referred to as $(J,K)$-secretary problem, which was introduced by~\cite{10.1287/moor.2013.0604}. In this variant, the algorithm is allowed to select up to $J$ candidates, aiming to choose as many as $K$ top-ranked candidates. Additionally, ~\citep{10.1287/moor.2013.0604} proposed a powerful linear programming (LP)-based technique that aids in designing and analyzing algorithms for the Secretary Problem and its various variants; see the discussion of the ``Policy-Revealing LP'' in the paragraph below.

\subsection{Factor-Revealing and Policy-Revealing LPs: Limitations of Existing Approaches}

\subsubsection{Factor-Revealing LPs and Policy-Revealing LPs}
The factor-revealing linear program (LP) is a technique commonly used to analyze the performance of approximation and online algorithms. It is particularly effective for algorithms where the approximation ratio or competitive ratio is challenging to compute directly. Examples include online (bipartite) matching~\cite{mehta2007adwords} and facility location problems~\cite{mahdian2006approximation}. This approach formulates the task of identifying the worst-case performance of a target algorithm as the optimization of a family of linear programs parameterized by an integer \( n \ge 1 \), representing the LP size such as the number of variables. For maximization problems such as online matching, a family of factor-revealing LPs is constructed to satisfy the following properties: (1) The optimal value of the LP is non-increasing as \( n = 1, 2, \ldots \); (2) The infimum of the optimal values as \( n \to \infty \) serves as a valid lower bound on the resulting approximation or competitive ratio of a target algorithm. \emph{Note that the optimal value of an LP parameterized by any specific integer \( n \) may not necessarily provide a valid lower bound on the target ratio. This makes it necessary to compute the asymptotic optimal value of the LP as its size \( n \to \infty \).
}

The policy-revealing LP was pioneered by~\cite{10.1287/moor.2013.0604} in their study of the classical Secretary Problem and its variants. Compared to factor-revealing LPs, the policy-revealing LP exhibits more power: this approach formulates the task of finding \emph{the best algorithm} as the optimization of a family of linear programs parameterized by an integer $n \ge 1$, representing the total number of candidates.

\subsubsection{Existing Approaches and Limitations}
For both Factor-Revealing LPs and Policy-Revealing LPs, we need to solve a family of LPs parameterized by an integer \( n \ge 1 \), representing the size of the LP. Due to their nature, we are particularly interested in determining the optimal values and solutions as the LP size \( n \to \infty \), which poses significant technical challenges. 

\xhdr{Primal-Dual Method}. A common approach to tackling the challenge above is the primal-dual method: By carefully examining the constraint patterns in the primal LP and its dual, we construct an \anhai{analytically} feasible solution for each program as a function of $n$ \anhai{and} then prove the equality or near equality of the objective values associated with these solutions.  The limitations of this approach are evident. As noted in~\cite{mahdian2011online}, the process of constructing \anhai{analytically} feasible solutions for both the primal and the dual programs with equal or nearly equal objective values ``can be painstaking and is often doomed to incur a constant loss in the factor (see~\cite{mahdian2006approximation,jain2003greedy} for an example).''

\xhdr{Strongly Factor-Revealing LPs}.~\cite{mahdian2011online} invented another elegant and powerful technique, called \emph{strongly factor-revealing LPs}, to analyze the performance of \rank  for online matching under random arrival order based on a family of factor-revealing LPs.  The main idea is to replace the existing constraints of factor-revealing LPs with newly developed ones so that each resulting LP has an optimal value that provides a valid lower bound on the competitiveness of \rank, regardless of its size \( n \). Consequently, this method allows us to bypass computing the asymptotic optimal objective value of the factor-revealing LPs as \( n \to \infty \). Instead, it suffices to compute the optimal value of any LP with the newly constructed constraints for any size of \( n \) to obtain a valid lower bound on the competitiveness of \rank.\footnote{As expected, the strongly factor-revealing LPs developed in~\cite{mahdian2011online} have increasing optimal values for \( n = 1, 2, \ldots\). This allows us to leverage computational power to determine the optimal value of an LP with size $n$ as large as possible, yielding the best possible lower bound on the target ratio.} 

There are several limitations associated with the technique of strongly factor-revealing LPs. First, it provides only a lower bound on the competitiveness or approximation ratio of a target algorithm for maximization problems. Additionally, the quality of the “best” achievable lower bound depends on both the available computational power and the convergence speed of the optimal values of strongly factor-revealing LPs as the size \( n \to \infty \). Second, similar to the primal-dual method, we must carefully analyze the structure of constraints in the factor-revealing LPs to develop new constraints, ensuring that the resulting LPs qualify as strongly factor-revealing LPs. This process demands considerable technical effort and numerous intricate tricks.

{To summarize, both methods above require \emph{instance-specific tricks and ideas}, which are difficult to generalize from one instance to another.}

\subsection{Our Approach: A Variational-Calculus Method}

We introduce a \emph{general} variational-calculus approach to analytically study the optimal objective value and solution to a family of LPs as the size \( n \to \infty \). The main idea is as follows: we first reformulate the problem in the limit, as the LP size grows infinitely large, as a variational-calculus instance and then apply existing methods to solve it. As mentioned earlier, since variational calculus is a well-established field, there exists a rich toolkit of techniques we can use to tackle even complex variational instances. Below, we present an illustrative example to demonstrate how our method works.

\begin{example}[An Illustrative Example of the Variational-Calculus Approach]\label{exam:29-a}
Consider the toy example of  factor-revealing LPs presented in~\cite{mahdian2011online} as follows:\footnote{We use slightly different notation here to maintain consistency with the rest of the paper.}
\begin{align}\label{lp:29-a}
\min_{\x=(x_i) \in [0,1]^n} \frac{1}{n} \sum_{i=1}^n x_i:~~ 1-x_i \le \frac{1}{n} \sum_{1\le \ell<i } x_\ell, \forall i \in [n]; x_i \ge x_{i+1}, \forall 1\le i<n.
\end{align}
\end{example}
Let \( \tau(n) \) denote the optimal objective value of the LP above. As noted by~\cite{mahdian2011online}, we can apply either the primal-dual method or the strongly factor-revealing LP technique to show that \( \tau^* := \lim_{n \to \infty} \tau(n) = 1 - 1/\sfe \).

For each \( i \in [n] \), let \( g(i/n) := x_i \), where \( g(t) \) is a continuous function with both its domain and range on \([0,1]\). We can prove that \LP-\eqref{lp:29-a} as its size $n \to \infty$ becomes equivalent to the following variational instance:

\begin{align}\label{vc:29-b}
\min_{g(t): t \in [0,1]} \int_0^1 g(t) \sd t:~~ 1-g(t) \le \int_0^t g(z) \sd z, \forall t \in [0,1]; 1\ge g(t) \ge g(t') \ge 0, \forall 0 \le t< t' \le 1.
\end{align}

For factor-revealing LPs, we are primarily concerned with the optimal objective values of \LP-\eqref{lp:29-a} as its size $n \to \infty$, i.e., the value \( \tau^* \), rather than the corresponding optimal solution. The equivalence between the two programs suggests that we can reduce the task of determining \( \tau^* \) to finding the optimal value of the variational instance~\eqref{vc:29-b}. In this context (when only the optimal value of a variational instance is of interest), we do not necessarily need to invoke standard variational techniques to obtain an optimal solution. Instead, simpler alternatives, such as an Ordinary Differential Equation (ODE)-based approach, may suffice. Consider the specific instance~\eqref{vc:29-b}. By setting \( u(t) := \int_0^t g(z) \, \mathrm{d}z \), we can reformulate instance~\eqref{vc:29-b} as follows:
\begin{align}\label{vc:29-d}
\min_{u(t): t \in [0,1]} u(1):~~u(t) + \du(t) \ge 1, \forall t \in [0,1];  u(0) = 0; 1 \ge \du(t) \ge \du(t') \ge 0,  \forall 0 \le t < t' \le 1.
\end{align}
For the instance~\eqref{vc:29-d}, a simple trick (which works most of the time, though not necessarily always) is to set the main constraint tight, i.e., \( u(t) + \du(t) = 1 \). This equality, combined with the initial condition \( u(0) = 0 \), yields a unique solution of \( u(t) = 1 - \sfe^{-t} \), which implies that \( u(1) = 1 - 1/\sfe \). Indeed, we can verify that this is the optimal objective value. We demonstrate this approach on two case studies of factor-revealing LPs and provide \anhai{rigorous} proof justifying it; see Section~\ref{sec:bal} and Appendix~\ref{app:rank}.

For policy-revealing LPs, we are concerned with both the optimal value and the optimal solution. For simple variational instances, the ODE-based technique described above suffices to obtain an optimal function. Consider the specific instance~\eqref{lp:29-a} as an example of a \emph{policy-revealing LP}; we can verify that the ODE-based approach yields a {unique} optimal function \( g(t) = \sfe^{-t} \) for the reformulated instance~\eqref{vc:29-b}. For more complex instances, however, we have to invoke standard variational techniques, which can yield not only an optimal solution but also potentially address the issue of uniqueness; see the case study in Section~\ref{sec:sp}.

\subsection{Main Contributions}

In this paper, we present a general Variational-Calculus (VC) approach as an alternative method for solving factor-revealing and policy-revealing LPs that arise in various contexts of online algorithm design and analysis. Specifically, we present three case studies to demonstrate the applicability of our approach. 

\subsubsection{Resolving the Factor-Revealing LP of \bal due to~\cite{mehta2007adwords} via a VC Approach} 

The algorithm \bal was first introduced by~\cite{kalyanasundaram2000optimal} for the Online $b$-Matching problem, which proves that $\bal$ achieves \anhai{optimal} competitiveness of $1-1/\sfe$ when $b \to \infty$. The work~\cite{mehta2007adwords} re-analyzed $\bal$ in a broader context of the Adwords problem, a generalized  version of Online $b$-Matching. Specifically, they presented an elegant combinatorial analysis for \bal and characterize the competitiveness of $\bal$ using a factor-revealing linear program, as shown in \LP-\eqref{lp:bal}. 

\begin{align}\label{lp:bal} 
\max_{\x=(x_i) \in [0,1]^N} \sum_{i \in [N]} x_i \cdot \bp{1-\frac{i}{N}}: ~~
 \sum_{1 \le i \le p} x_i \bp{1+\frac{p-i}{N}} \le  \frac{p}{N},  \forall p \in [N].
\end{align}

The following theorem is implied by the work~\cite{mehta2007adwords}.
\begin{theorem}[\cite{mehta2007adwords}]\label{thm:re-bal}
Let $\LP_B(N)$ denote \LP-\eqref{lp:bal} parameterized with an integer $N$. For Online $b$-Matching, $\bal$ achieves a competitiveness of at least $1-\val(\LP_B(\infty))$ when $b \to \infty$, where $\val(\LP_B(\infty))$ denotes the optimal value of $\LP_B(N)$ when $N \to \infty$.
\end{theorem}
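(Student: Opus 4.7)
The plan is to establish the theorem by following the combinatorial primal analysis of~\cite{mehta2007adwords}, extracting \LP-\eqref{lp:bal} as a size-parameterized relaxation that upper-bounds the competitive loss of \bal on every input instance in the limit $b\to\infty$. No variational machinery or dual LP analysis enters here — that is reserved for the later asymptotic evaluation of $\val(\LP_B(\infty))$. I would fix an arbitrary Online $b$-Matching instance $G=(U,V,E)$ with uniform budget $b$, and reduce to the worst case by a standard padding argument so that the optimal offline solution saturates every $u\in U$ to its full budget and the total OPT value equals $b|U|$; all subsequent quantities are normalized by this denominator.

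Next I would discretize the \bal fill profile. For $b$ large, every bin's final fill level under \bal is essentially a real number in $[0,1]$, and I partition this unit interval into $N$ equal subintervals of width $1/N$. For each $i\in[N]$ I would introduce a nonnegative quantity $x_i$ derived from the \bal transcript — roughly, the normalized mass of bid throughput that \bal captured at the $i$-th fill slice of the bins — so that $\x=(x_i)$ encodes the structural footprint of \bal on this instance.

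The objective $\sum_i x_i(1-i/N)$ then matches, up to lower-order terms vanishing as $b\to\infty$, the normalized competitive loss $1-\val(\bal)/\val(\mathrm{OPT})$, via a direct accounting of unfilled capacity across the $N$ slices. The constraints $\sum_{i\le p} x_i\bp{1+(p-i)/N}\le p/N$ for each $p\in[N]$ follow from the defining property of \bal: an arriving online vertex is always routed to a currently-least-full neighbor, so every online vertex whose OPT-partner ultimately lies in the set $S_p$ of bins with final level at most $p/N$ must itself be routed by \bal into a bin whose level at arrival is at most the partner's current level, hence at most $p/N$. Comparing capacities on both sides of this charging, with the $(p-i)/N$ term accounting for the ``excess budget'' a level-$i$ bin has consumed against threshold $p/N$, yields exactly the stated family of constraints. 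Hence $\x$ is feasible in $\LP_B(N)$ and the normalized loss on the instance is at most $\val(\LP_B(N))$.

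Finally I would take the limits: sending $b\to\infty$ renders the discretization error negligible for each fixed $N$, and then sending $N\to\infty$ gives the competitiveness bound $\val(\bal)/\val(\mathrm{OPT})\ge 1-\val(\LP_B(\infty))$ on every instance, as claimed. The main obstacle will be the third step — producing the $(p-i)/N$ ``excess budget'' factor rigorously. The factor itself looks innocuous, but justifying it demands an amortized charging argument that simultaneously tracks bin levels at arrival times and the OPT-to-BAL correspondence, which is precisely the subtle content of the original analysis in~\cite{mehta2007adwords}. The padding reduction, the discretization, and the limit passage are otherwise standard.
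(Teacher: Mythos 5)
Your outline reproduces the skeleton of the paper's proof in Appendix~\ref{app:bal}: normalize so that the offline optimum saturates every bin, slice the unit budget into $N$ levels, read off the objective of $\LP_B(N)$ as the normalized unfilled capacity, and derive the constraints from a charging argument that exploits \bal's least-full routing rule. You also correctly isolate the genuinely nontrivial step---the charging that compares what the offline optimum earns from low-type bins with what \bal earns at low fill levels, which is exactly Lemma~\ref{lem:9-20-a}. But the step you defer as ``the main obstacle,'' namely producing the $(p-i)/N$ coefficients, is precisely the content you were asked to supply, and your sketch does not contain the idea that makes it work.

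Concretely, the paper's argument requires \emph{two} families of quantities, not one. Let $\alpha_i$ be the number of bins whose final \bal fill level lies in $((i-1)/N,\, i/N)$, and let $\beta_j$ be the total money \bal collects in the $j$-th slab of the bins' budgets (the fill interval from $(j-1)/N$ to $j/N$, aggregated over all bins). The charging argument you sketch yields $\sum_{j\le p}\beta_j \ge \sum_{i\le p}\alpha_i$. Separately, one has the purely bookkeeping identity $\beta_j = \frac{1}{N}\left(n-\sum_{i<j}\alpha_i\right)$ up to an error of order $\alpha_j/N$, because a type-$i$ bin contributes exactly $1/N$ to every slab $j\le i$ and nothing to slabs $j>i$. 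Summing this identity over $j\le p$ and substituting into the charging inequality is what produces $\sum_{i\le p}\alpha_i\left(1+\frac{p-i}{N}\right)\le \frac{pn}{N}$: the term $\alpha_i\cdot\frac{p-i}{N}$ is the \emph{deficit} of slab mass that type-$i$ bins fail to supply to slabs $i+1,\dots,p$, not an ``excess budget consumed against the threshold.'' Your single variable $x_i$ (``bid throughput captured at the $i$-th fill slice'') conflates $\alpha_i/n$ with $\beta_i$; with only one of the two you can state the objective or the constraint, but not both, and no amortized argument on a single quantity will generate the $(p-i)/N$ factor. A secondary point: the discretization errors ($\epsilon$ and $\epsilon_j$ in the paper) are of order $n/N$ and are killed by taking $N$ large relative to $n$, not by sending $b\to\infty$; the role of $b\to\infty$ is only to make the fill fractions effectively continuous.
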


We emphasize that \emph{neither  \LP-\eqref{lp:bal} nor Theorem~\ref{thm:re-bal} is explicitly stated in~\cite{mehta2007adwords}}.  For completeness, we  provide  a formal proof of Theorem~\ref{thm:re-bal} in Appendix~\ref{app:bal}.\footnote{While following  the main idea in~\cite{mehta2007adwords}, we have added several necessary modifications to make the proof of Theorem~\ref{thm:re-bal} more rigorous compared with the version presented in~\cite{mehta2007adwords}.} As suggested by Theorem~\ref{thm:re-bal}, the task of lower bounding the competitiveness of $\bal$ is reduced to that of computing the value of $\val(\LP_B(\infty))$. Instead of manually identifying a pair of optimal solutions to the primal program of $\LP_B(N)$ in~\eqref{lp:bal} and its dual for any given $N \in \mathbb{Z}_{\ge 1}$, as did in~\cite{mehta2007adwords}, we present a variational-calculus  approach to exactly solving for the value $\val(\LP_B(\infty))$. Specifically, we propose a variational instance as follows:
 \begin{align}\label{vc:bal}
\max_{g(t): t \in [0,1]} & \int_0^1 g(t) (1-t) \sd t:~~ \int_0^t  g(z) \sbp{1-z+t} \sd z \le t,~~ \forall t \in [0,1].
\end{align}

Additionally, we prove that 
\begin{theorem}[Section~\ref{sec:bal}]\label{thm:bal}
When $N \to \infty$, $\LP_B(N)$ in \LP-\eqref{lp:bal} becomes equivalent to the variational instance~\eqref{vc:bal}, which has an optimal objective value of $1/\sfe$.
\end{theorem}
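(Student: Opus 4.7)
The plan is to prove Theorem~\ref{thm:bal} in two parts: (i) the limiting equivalence of $\LP_B(N)$ with the variational instance~\eqref{vc:bal} as $N \to \infty$, and (ii) that this variational instance has optimal value exactly $1/\sfe$.

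For part (i), I would use the scaling $x_i \leftrightarrow g(i/N)/N$, interpreting each $x_i$ as the discrete counterpart of $g(t)\,\sd t$ at $t = i/N$ with $\sd t = 1/N$. Under this correspondence, $\sum_{i=1}^N x_i(1 - i/N)$ is a Riemann sum for $\int_0^1 g(t)(1-t)\,\sd t$, and the $p$-th LP constraint with $p = \lfloor tN \rfloor$ converges to $\int_0^t g(z)(1-z+t)\,\sd z \le t$ via Riemann-sum approximation. To make the equivalence precise I would argue both directions: (a) Given any continuous feasible $g$ for~\eqref{vc:bal}, set $x_i^{(N)} := (1 - \epsilon_N)\cdot g(i/N)/N$ with $\epsilon_N = O(1/N)$ to absorb the Riemann-sum error, verify feasibility of $x^{(N)}$ for $\LP_B(N)$, and conclude $\liminf_N \val(\LP_B(N)) \ge \val(\mathrm{VC})$. (b) Conversely, given any feasible $x^{(N)}$ for $\LP_B(N)$, define $g_N(t) := N x_{\lceil tN \rceil}$ and use a compactness argument (e.g., Helly's selection theorem applied to the cumulative $G_N(t) := \int_0^t g_N$, which is uniformly bounded by the constraint at $t = 1$) to extract a subsequential weak limit that is feasible for~\eqref{vc:bal} with matching objective.

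For part (ii), I would use the ODE-based technique sketched in the introduction. Setting the main constraint tight, $\int_0^t g(z)(1+t-z)\,\sd z = t$, and differentiating in $t$ yields $g(t) + \int_0^t g(z)\,\sd z = 1$. Writing $G(t) := \int_0^t g(z)\,\sd z$, this is the linear ODE $G'(t) + G(t) = 1$ with $G(0) = 0$, whose unique solution is $G(t) = 1 - \sfe^{-t}$, so $g(t) = \sfe^{-t}$; a direct integration then gives $\int_0^1 \sfe^{-t}(1-t)\,\sd t = 1/\sfe$. To certify that this candidate is indeed optimal, I would exhibit a nonnegative Lagrange multiplier $\mu(t) = \sfe^{t-1}$ on the main constraint and invoke weak duality: for any feasible $g \ge 0$, Fubini gives
\begin{align*}
\int_0^1 g(t)(1-t)\,\sd t \le \int_0^1 g(z)\bp{\int_z^1 (1+t-z)\mu(t)\,\sd t}\sd z \le \int_0^1 t\,\mu(t)\,\sd t = \frac{1}{\sfe},
\end{align*}
where the first inequality uses $\int_z^1 (1+t-z)\sfe^{t-1}\,\sd t = 1 - z$ (verified by setting $M(z) := \int_z^1 \mu(t)\,\sd t$ and solving the adjoint ODE $M(z) - M'(z) = 1$ with $M(1) = 0$), and the second inequality is the primal constraint integrated against $\mu\,\sd t$.

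The main obstacle is part (i): rigorously passing to the limit in a family of LPs whose dimension grows with $N$. Each of the $N$ constraints carries a Riemann-sum error of order $1/N$, and one must ensure these errors do not accumulate to violate any constraint; the uniform $(1 - \epsilon_N)$-shrinkage in direction (a) handles this cleanly, while direction (b) requires a compactness/tightness argument because $g_N$ is not a priori bounded in sup-norm. Once part (i) is established, the optimality argument in part (ii) is essentially mechanical given the multiplier $\mu(t) = \sfe^{t-1}$, which itself can be derived systematically from the stationarity condition of the Lagrangian viewed as a functional in both $g$ and $\mu$.
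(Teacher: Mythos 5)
Your proposal is correct, and its two halves relate to the paper's proof differently. For the equivalence claim you follow essentially the same route as the paper (Section~\ref{sec:eq-b}): the scaling $x_i = g(i/N)/N$ and Riemann-sum convergence of the objective and of the $p$-th constraint at $p=\lfloor tN\rfloor$. The paper, however, stops at writing the discrete sums as tending to the integrals, with no error control and no argument in the discrete-to-continuous direction; your $(1-\epsilon_N)$-shrinkage and Helly compactness steps are precisely what a fully rigorous version needs, with the minor caveat that the weak limit of $g_N$ is a priori only a monotone measure $\sd G$, so feasibility of the limit is best checked in the integrated form $G(t)+\int_0^t G(z)\,\sd z\le t$, which is exactly the paper's reformulation~\eqref{vc:balb}. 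For optimality your route is genuinely different. The paper substitutes twice, $u(t)=\int_0^t g$ and $v(t)=\int_0^t u$, reduces to maximizing $v(1)$ subject to $v+v'\le t$, $v(0)=0$, and runs a Gr\"{o}nwall-type comparison: with $v^*(t)=\sfe^{-t}-(1-t)$ the tight ODE solution and $h:=v^*-v$, one gets $(h\,\sfe^t)'\ge 0$, hence $v\le v^*$ pointwise and $v(1)\le v^*(1)=1/\sfe$. You instead certify the candidate $g^*(t)=\sfe^{-t}$ by an explicit dual multiplier $\mu(t)=\sfe^{t-1}$ and weak duality via Fubini; since $\int_z^1(1+t-z)\sfe^{t-1}\,\sd t=1-z$ holds with equality (your first ``$\le$'' is in fact an ``$=$''), the chain closes at $\int_0^1 t\,\sfe^{t-1}\,\sd t=1/\sfe$. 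Both arguments are complete and yield the exact value: the paper's comparison gives the stronger pointwise domination without guessing a multiplier, while your certificate is self-contained and checkable by two integrals --- it is the continuous analogue of the primal--dual method the paper was deliberately sidestepping, though in the continuum it incurs none of the usual loss.
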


According to Theorem~\ref{thm:re-bal}~\cite{mehta2007adwords}, the competitiveness achieved by \bal is lower bounded by $1-\val(\LP_B(\infty))$, which is equal to $1-1/\sfe$.  This aligns with the findings in~\cite{kalyanasundaram2000optimal,mehta2007adwords}.

\subsubsection{Resolving the Factor-Revealing LP of \rank due to~\cite{mehta2012online} via a VC Approach}  The classical algorithm \rank was first introduced by~\cite{kvv}, which proved that \rank achieves an optimal competitiveness of $1-1/\sfe$ for the Online Matching problem (under adversarial order).  The book~\cite{mehta2012online} (Chapter 3) offers a refined combinatorial analysis for \rank. Specifically, it proposed a factor-revealing linear program (LP), as shown in LP-\eqref{lp:rank}, whose optimal value provides a valid lower bound on the competitiveness achieved by \rank when the number of variables $n$ (representing the input graph size) approaches infinity.\footnote{For completeness, we provide details of the \rank algorithm and how~\cite{mehta2012online} derived LP-\eqref{lp:rank} in Appendix~\ref{app:rank}.}
\begin{align}\label{lp:rank}
{\min_{\x =(x_i)\in [0,1]^n} \frac{1}{n}\sum_{i\in [n]} x_i:~~ x_i+\frac{1}{n} \sum_{1\le j \le i} x_j \ge 1, \forall i \in [n].}
\end{align}

Let $\LP_R(n)$  denote  \LP-\eqref{lp:rank} parameterized with an integer $n \ge 1$. We propose a variational-calculus instance as follows:
\begin{align}\label{vc:rank}
{\min_{g(t): t\in [0,1]} \int_0^1 g(t)~ \sd t : g(t)+\int_0^t g(z) ~\sd z \ge 1, \forall t \in [0,1].}
\end{align}

Additionally, we prove that
\begin{theorem}[Appendix~\ref{app:rank}]\label{thm:rank}
When $n \to \infty$, $\LP_R(n)$ in \LP-\eqref{lp:bal} becomes equivalent to the variational instance~\eqref{vc:rank}, which has an optimal objective value of $1/\sfe$.
\end{theorem}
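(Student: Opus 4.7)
The plan is to follow the same two-stage variational-calculus template already deployed for Theorem~\ref{thm:bal}, specialized to the min-LP setting of \rank. Stage 1 establishes that $\LP_R(n)$ and the variational instance~\eqref{vc:rank} have the same optimal value as $n \to \infty$; Stage 2 computes that common value, which the theorem asserts equals $1/\sfe$.

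For Stage 1, I would deploy the canonical discretization/interpolation correspondence: map $i/n \mapsto t \in [0,1]$ and $x_i \mapsto g(t)$, so that the Riemann sum $\frac{1}{n}\sum_{j \le i} x_j$ converges to $\int_0^t g(z)\,\sd z$ (uniformly for bounded monotone sequences) and the objective $\frac{1}{n}\sum_i x_i$ converges to $\int_0^1 g(t)\,\sd t$. To show that $\val(\LP_R(\infty))$ is at most the value of~\eqref{vc:rank}, I would sample any feasible $g$ at $t = i/n$ and absorb the $O(1/n)$ Riemann-sum defect into a vanishing additive slackening of the right-hand side of the discrete constraint. The reverse inequality is obtained by taking a sequence of optimal discrete solutions $\x^{(n)}$, forming the piecewise-constant functions $g_n(t) := x^{(n)}_{\lceil nt \rceil}$, invoking boundedness in $L^\infty$ together with a Helly/monotonicity compactness argument to extract an $L^1$-convergent subsequence, and passing to the limit in both the constraint and the objective.

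For Stage 2, I would apply the ODE-based shortcut highlighted after Example~\ref{exam:29-a}: substituting $u(t) := \int_0^t g(z)\,\sd z$ recasts the problem as minimizing $u(1)$ subject to a pointwise first-order inequality in $u$ and $\du$, together with the boundary condition $u(0) = 0$ and the monotonicity constraint $\du \ge 0$. Saturating the inequality yields a linear first-order ODE solvable in closed form via an integrating factor; evaluating its solution at $t = 1$ produces the target value $1/\sfe$ claimed by the theorem. To certify that saturation is globally optimal rather than merely a stationary feasible point, I would adapt the Lagrange-multiplier argument used for~\eqref{vc:bal} in Section~\ref{sec:bal}: introduce a nonnegative multiplier $\lambda(t)$ against the pointwise constraint, derive a stationarity condition for the relaxed Lagrangian, and invoke complementary slackness to rule out any interior slack.

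The main obstacle is the global-optimality step in Stage 2: because the variational constraint is a one-sided inequality, one must argue that no feasible $g$ which leaves the constraint slack on a positive-measure set can improve on the saturated ODE solution. The KKT-style duality above is the cleanest route; a backup option is a direct Gronwall-type lower bound on $\int_0^1 g(t)\,\sd t$ derived straight from the integral constraint and the boundary condition. Stage 1, by contrast, is mainly careful bookkeeping parallel to the proof of Theorem~\ref{thm:bal}, with the only real delicacies being the uniform Riemann-sum error estimate and the compactness argument that extracts the limiting $g$; I expect both to be routine given the monotonicity structure of the primal LP.
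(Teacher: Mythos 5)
Your plan is essentially the paper's, and in fact your ``backup option'' \emph{is} the paper's proof: Appendix~\ref{app:rank} substitutes $u(t):=\int_0^t g(z)\,\sd z$, takes the saturated solution $u^*(t)=1-\sfe^{-t}$ of $u'+u=1$ with $u(0)=0$, and certifies global optimality by exactly the Gronwall-type comparison you relegate to a fallback: setting $h:=u-u^*$, one has $h(0)=0$ and $h'+h\ge 0$, hence $(h(t)\sfe^{t})'\ge 0$, so $u\ge u^*$ pointwise and $u(1)\ge u^*(1)$. Your ``primary'' route is misattributed: Section~\ref{sec:bal} does \emph{not} use Lagrange multipliers for~\eqref{vc:bal}; it uses this same comparison argument, and the multiplier machinery appears only in Section~\ref{sec:sp}, where uniqueness of the optimizer is at stake. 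A KKT argument would work but buys nothing here. Two concrete cautions. First, the value: $u^*(1)=1-1/\sfe$, not $1/\sfe$; the paper's own appendix concludes that $\val(\LP_R(\infty))=1-1/\sfe$ (consistent with \rank being $(1-1/\sfe)$-competitive), so the ``$1/\sfe$'' in the theorem statement is evidently a typo inherited from Theorem~\ref{thm:bal}, and your claim that evaluating the ODE solution at $t=1$ ``produces $1/\sfe$'' is numerically false as written. Second, in Stage~1 your Helly/monotonicity compactness step presupposes a monotonicity constraint that $\LP_R(n)$ does not have (unlike the toy LP~\eqref{lp:29-a}); you would need to first argue that optimal solutions may be taken non-increasing, or replace that step with weak-$*$ compactness in $L^\infty$. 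The paper itself treats Stage~1 purely informally as a Riemann-sum limit, so your added rigor exceeds what is written, but as proposed it does not quite close.
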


Our result is consistent with that in~\cite{kvv,mehta2012online}. The proof is quite similar to that of Theorem~\ref{thm:bal} and we defer it to Appendix~\ref{app:rank}.


\subsubsection{Resolving the Policy-Revealing LP for the Classical Secretary Problem due to~\cite{10.1287/moor.2013.0604} via a VC Approach} 
The work~\cite{10.1287/moor.2013.0604} proposed different variants of the Secretary Problem, and for each variant, it presented a policy-revealing LP parameterized by an integer $n$, representing the total number of candidates accessible to the algorithm. Among all the LPs presented, the policy-revealing LP for the classical secretary problem, as shown in LP-\eqref{lp:sp}, serves as the \emph{foundational blueprint} on which all others are based.

\begin{align}\label{lp:sp}
{\max_{\x =(x_i)\in [0,1]^n} \sum_{i\in [n]} x_i \cdot (i/n):~~ x_i \cdot i \le 1-\sum_{1 \le \ell<i} x_\ell, \forall i \in [n].}
\end{align}
 
According to~\cite{10.1287/moor.2013.0604}, the variable $x_i \in [0,1]$ in LP-\eqref{lp:sp} represents the overall probability that any policy $\alg$ selects the candidate at position $i \in [n]$ (\ie arriving at time $i$), and the objective captures the overall probability that $\alg$ selects the best candidate. Note that the probability $x_i$ accounts for the following three random events: $\alg$ reaches position $i$ (\ie it does not stop earlier), the candidate at position $i$ is the best among all so far,\footnote{Recall that for the classical secretary problem, our goal is to maximize the probability of selecting the best candidate. As a result, we can safely restrict our attention to policies that select only the best-so-far candidate.} and $\alg$ opts to accept the candidate at position $i$. The work~\cite{10.1287/moor.2013.0604} proves the equivalence between a feasible policy for the classical secretary problem and a feasible solution to LP-\eqref{lp:sp}, as stated in the theorem below.

\begin{theorem}[\cite{10.1287/moor.2013.0604}]\label{thm:re-sp}
Any feasible solution $\x=(x_i)$ to LP-\eqref{lp:sp} one-one corresponds to a viable policy $\alg(\x)$ for the classical secretary problem, such that $\alg(\x)$ picks the best candidate with probability equal to the objective value on $\x$, which is $\sum_{i \in [n]} x_i \cdot (i/n)$.
\end{theorem}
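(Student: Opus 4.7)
The plan is to establish the correspondence in both directions, explicitly constructing one object from the other, and in each case verifying that the LP constraints and objective exactly encode the probability structure of selecting the overall best candidate from a uniformly random arrival order.

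\textbf{Forward direction (policy to LP solution).} I would first argue that, without loss of generality, any viable policy accepts only best-so-far candidates, since committing to a non-best-so-far candidate guarantees failure. Such a policy is then fully described by conditional acceptance probabilities $p_1,\ldots,p_n\in[0,1]$, where $p_i$ is the probability of accepting a best-so-far candidate at position $i$ given that the algorithm has not stopped before position $i$. Setting $x_i:=\Pr[\alg\text{ accepts at }i]$, and using that ``position $i$ is best-so-far'' occurs with probability $1/i$ independently of the relative order of positions $1,\ldots,i-1$ (and hence of the entire history of $\alg$ up to time $i-1$), I would derive the key identity
\[
x_i \;=\; \Bigl(1-\sum_{\ell<i} x_\ell\Bigr)\cdot \frac{1}{i}\cdot p_i,
\]
from which the LP constraint $x_i\cdot i\le 1-\sum_{\ell<i}x_\ell$ follows immediately since $p_i\le 1$, and $x_i\in[0,1]$ is automatic.

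\textbf{Backward direction (LP solution to policy).} Given any feasible $\x$, I would set $p_i := x_i\cdot i\,/\,(1-\sum_{\ell<i}x_\ell)$ when the denominator is positive, and $p_i:=0$ otherwise (in which case the LP constraint forces $x_i=0$, so the choice is immaterial). Feasibility of $\x$ ensures $p_i\in[0,1]$, so the induced policy $\alg(\x)$ is viable. A straightforward induction on $i$, using the identity displayed above, then yields $\Pr[\alg(\x)\text{ accepts at }i]=x_i$ for every $i$. Finally, the match between the LP objective and the success probability follows because, conditional on accepting at position $i$ (which entails best-so-far at $i$), the probability that the accepted candidate is overall best equals $(1/n)/(1/i)=i/n$ by exchangeability of ranks; hence $\Pr[\alg(\x)\text{ picks the overall best}]=\sum_{i\in[n]} x_i\cdot(i/n)$.

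\textbf{Main obstacle.} The principal subtlety is justifying the conditional independence invoked above: one must verify that conditioning on the complete history of $\alg$'s randomness and observations up to time $i-1$ does not alter the probability that position $i$ is best-so-far, nor the probability that a best-so-far candidate at position $i$ is overall best. The reason is that such history depends only on the relative ranking among positions $1,\ldots,i-1$, which is independent of where within $\{1,\ldots,i\}$ or $\{1,\ldots,n\}$ the relevant maximum lies. This exchangeability argument is intuitive but must be spelled out precisely, because it is exactly what promotes the LP constraint from a necessary inequality into an equality realizable by a suitable choice of $p_i$, thereby rendering the correspondence genuinely bijective rather than lossy.
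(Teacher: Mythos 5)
Your proposal is correct and is essentially the standard argument from the cited source~\cite{10.1287/moor.2013.0604}: the paper itself does not reprove Theorem~\ref{thm:re-sp} but imports it, giving only the policy-recovery formula $p_i = x_i\cdot i/(1-\sum_{\ell<i}x_\ell)$ in a footnote, which is exactly your backward map. Your forward direction, the induction showing $\Pr[\alg(\x)\text{ accepts at }i]=x_i$, and the exchangeability argument yielding the conditional factor $i/n$ are all sound and complete the correspondence as the original reference does.
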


Theorem~\ref{thm:re-sp} suggests that we can reduce the task of identifying an optimal policy for the classical secretary problem  to that of finding an optimal solution to LP-\eqref{lp:sp}.\footnote{As pointed out in~\cite{10.1287/moor.2013.0604}, we can recover the policy $\alg(\x)$ from any feasible solution $\x$ simply as follows: Upon the arrival of candidate at position $i \in [n]$, the policy should accept the candidate  with probability $x_i \cdot i/(1-\sum_{1 \le \ell<i} x_\ell)$ provided that the policy has not stopped earlier and  candidate $i$ has the best performance so far; otherwise, the policy should reject the candidate.} 
To solve an optimal solution for the LP-\eqref{lp:sp}, the work~\cite{10.1287/moor.2013.0604} follows a common practice by first guessing a pair of primal and dual solutions and then proving each feasibility and the equality of the two objective values.  In contrast, we propose a variational instance as follows:
 \begin{align}\label{vc:sp}
\max_{g(t): t \in [0,1]} & \int_0^1 g(t)  \sd t:~~0 \le g(t) \le 1- \int_0^t  \frac{g(z)}{z}  \sd z,~~ \forall t \in [0,1].
\end{align}

Let $\LP_S(n)$  denote  \LP-\eqref{lp:sp} parameterized with an integer $n \ge 1$.  We prove that
\begin{theorem}[Section~\ref{sec:sp}]\label{thm:sp}
(1) When \( n \to \infty \), \( \LP_S(n) \) in \LP-\eqref{lp:sp} becomes equivalent to the variational instance~\eqref{vc:sp}. (2) The instance~\eqref{vc:sp} admits a unique maximizer \( g^*(t) \) with an optimal objective value of \( 1/\sfe \), where \( g^*(t) \) takes the form \( g^*(t) = 0 \) for \( t \in [0, 1/\sfe] \) and \( g^*(t) = (1/\sfe)/t \) for \( t \in (1/\sfe, 1] \).\footnote{More precisely, the uniqueness of the maximizer \( g^* \) holds up to a set of measure zero. In other words, any other maximizer for instance~\eqref{vc:sp} must be equal to \( g^* \) almost everywhere, except possibly on a set of measure zero.}
\end{theorem}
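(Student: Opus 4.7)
The plan is to split the proof along the two claims of the theorem. For claim (1), I would use a Riemann-sum argument to pass from $\LP_S(n)$ to the variational instance~\eqref{vc:sp}; for claim (2), I would apply a change of variables that collapses the variational instance into a one-parameter scalar optimization, from which the optimal value $1/\sfe$ and the uniqueness of $g^*$ can both be read off directly, bypassing the full Euler--Lagrange machinery.

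For part (1), I would use the substitution $x_i = g(i/n)/i$. Under this substitution the LP constraint $i \cdot x_i \le 1 - \sum_{\ell<i} x_\ell$ becomes $g(i/n) \le 1 - \sum_{\ell<i} g(\ell/n)/\ell$, where the partial sum is a left Riemann sum for $\int_0^{i/n} g(z)/z \, \sd z$ on the uniform grid of mesh $1/n$; similarly the LP objective $\sum_i x_i \cdot (i/n) = \sum_i g(i/n)/n$ is a Riemann sum for $\int_0^1 g(t) \, \sd t$. The ``easy'' inclusion $\liminf_n \val(\LP_S(n)) \ge $ (variational value) follows by discretizing any feasible continuous $g$ and checking that the induced $x$ is LP-feasible up to an $o(1)$ correction. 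The reverse inclusion follows by taking step-function interpolants of near-optimal LP solutions, extracting a subsequential limit by a suitable compactness argument (e.g., Helly), and verifying that the limit satisfies the variational constraints and attains the corresponding objective.

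For part (2), I would introduce $h(t) := 1 - \int_0^t g(z)/z \, \sd z$, so that $h(0) = 1$, $g(t) = -t h'(t)$, the constraint $g \ge 0$ is equivalent to $h$ being non-increasing, and the main constraint $g(t) \le h(t)$ is equivalent to $(t h(t))' = h(t)-g(t) \ge 0$, i.e., $t \mapsto t h(t)$ is non-decreasing on $[0,1]$. Integration by parts rewrites the objective as
\begin{align*}
\int_0^1 g(t)\,\sd t \;=\; -\int_0^1 t\, h'(t)\,\sd t \;=\; -h(1) + \int_0^1 h(t)\,\sd t.
\end{align*}
The two monotonicities of $h$ yield the pointwise bound $h(t) \le \min\{1,\, h(1)/t\}$; writing $c := h(1) \in (0,1]$,
\begin{align*}
-h(1) + \int_0^1 h(t)\,\sd t \;\le\; -c + \int_0^c 1\,\sd t + \int_c^1 \frac{c}{t}\,\sd t \;=\; -c\,\ln c,
\end{align*}
and the scalar maximum of $-c \ln c$ over $c \in (0,1]$ is attained uniquely at $c^* = 1/\sfe$ with value $1/\sfe$. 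Reversing the transformation recovers the claimed $g^*$. Uniqueness (up to a null set) then follows because each inequality above must be tight on a maximizer: the pointwise bound forces $h(t) = \min\{1, h(1)/t\}$ almost everywhere, and the strict concavity of $-c \ln c$ forces $h(1) = 1/\sfe$.

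The main obstacle I anticipate lies in making part (1) fully rigorous. Specifically, I must (a) identify the right function class---plausibly non-negative measurable $g$ with $g(z)/z$ integrable on $[0,1]$, a property which the constraint $g(t) \le 1 - \int_0^t g(z)/z\,\sd z$ already enforces in a weak sense by forcing $g(0^+)=0$ fast enough---and (b) verify two-sided convergence of $\val(\LP_S(n))$ to the variational optimum, which requires controlling the discrete optimizers near the singular endpoint $t=0$ so that no mass concentrates there in the limit. Both issues are technical but, given the very explicit form of $g^*$, should be tractable by standard discretization and tightness arguments.
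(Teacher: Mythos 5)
Your proposal is correct, and while part (1) mirrors the paper (the same substitution $g(i/n)=i\cdot x_i$ and Riemann-sum passage to the limit in Lemma~\ref{lem:10-16-a} -- indeed the paper's treatment is purely formal, so the two-sided convergence and tightness issues you flag are real gaps in the paper's own argument rather than in yours), part (2) takes a genuinely different and notably more elementary route. The paper first rewrites the instance via $u(t)=\int_0^t g(z)/z\,\sd z$ (your $h=1-u$), then introduces squared slack variables $v^2,w^2$, applies the Lagrange-multiplier formalism of variational calculus to derive necessary stationarity conditions (Lemma~\ref{lem:25-a}), concludes that any optimizer alternates between $\du=0$ and $\du\cdot t+u=1$ on intervals, and finally optimizes over all interval sequences $\mathbf{s}=(a_1,b_1,\ldots,a_K,b_K)$ in Lemma~\ref{lem:26-a}. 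Your argument replaces all of this with two monotonicity observations ($h$ non-increasing, $t\,h(t)$ non-decreasing) that yield the pointwise envelope $h(t)\le\min\{1,h(1)/t\}$, after which integration by parts collapses the problem to maximizing $-c\ln c$ over $c=h(1)\in(0,1]$. What your approach buys is considerable: it avoids the unverified regularity and constraint-qualification hypotheses implicit in the multiplier derivation, it handles uniqueness cleanly (tightness of the pointwise bound a.e.\ plus strict concavity of $-c\ln c$), and it sidesteps the combinatorial optimization over interval structures entirely. What the paper's approach buys is generality as a template -- the Lagrange-multiplier machinery is the method the paper is advertising for harder variants of the Secretary Problem where no closed-form envelope argument is available -- so your proof, while shorter and arguably more rigorous for this instance, would not serve the paper's expository purpose of demonstrating the variational toolkit. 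One small point to nail down when writing this up: after concluding $h=\min\{1,(1/\sfe)/t\}$ a.e., note that both sides are continuous (indeed $h$ is absolutely continuous by construction), so equality holds everywhere and $g=-t\,h'$ is recovered a.e., matching the stated $g^*$ up to a null set exactly as the theorem's footnote requires.
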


\xhdr{Remarks on Theorem~\ref{thm:sp}}. (i) As suggested in the proof of Claim (1) of Theorem~\ref{thm:sp} in Section~\ref{sec:sp-a}, the value $g(t)$ for any $t \in [0,1]$ can be viewed as an approximation of $x_{i} \cdot i$ with $i=\lfloor t \cdot n \rfloor$ when $n \gg 1$. As a result, the  optimal policy  
corresponding to $g^*$ should accept any candidate at position $i$ with probability equal to 
\[
\frac{x^*_i \cdot i}{1-\sum_{1 \le \ell<i} x^*_\ell} \approx \frac{g^*(i/n)}{1-\int_0^{i/n}\frac{g^*(z)}{z} \sd z}=\begin{cases}
0, & \text{if $i/n \le  1/\sfe$},\\
1, &\text{if $i/n>1/\sfe$},
\end{cases}
\]
given that the optimal policy has not stopped before $i$ and candidate $i$ performs best so far. This recovers the result in~\cite{10.1287/moor.2013.0604}. (ii) The uniqueness of the maximizer for the instance~\eqref{vc:sp} implies the \anhai{uniqueness} of the optimal policy for the classical Secretary Problem, due to the result of Theorem~\ref{thm:re-sp}. This highlights the advantages of the variational-calculus-based approach over the traditional one: the former can not only identify an optimal solution but also address the issue of uniqueness that is beyond the scope of the latter. (iii) Note that the policy-revealing \LP-\eqref{lp:sp} formulated for the classical Secretary Problem serves as the foundation for the policy-revealing LPs used across other Secretary Problem variants proposed in~\cite{10.1287/moor.2013.0604}. This foundational role supports our confidence that the variational-calculus approach can be effectively extended to these other cases as well.

\subsection{Other Related Work}

Our work is part of a long line of studies that use continuous methodologies to address algorithm design and analysis for discrete optimization problems, particularly in online optimization. We present a few recent examples as follows: \cite{huang2021online} introduced a concept called the \emph{Poisson Arrival Model} as an alternative to the \emph{Known Independent Identical Distributions} (KIID) model. They proved the equivalence between these two models in competitive analyses for online algorithms, showing that the proposed Poisson arrival model makes it much easier to apply continuous functions to assist in algorithm design and analysis compared with its discrete counterpart (KIID). In their subsequent study~\cite{huang2022power}, they considered edge-weighted online matching with free disposal and successfully utilized a second-order ordinary differential inequality to characterize the entire progress of the top-half sampling algorithm. More recently,~\cite{macrury2024random} introduced an approach for designing Random-Order Contention Resolution Schemes via any exact solution in continuous time.

\section{Resolving the Factor-Revealing LPs of $\bal$ via a Variational-Calculus Approach}\label{sec:bal}
Recall that $\LP_B(N)$ denotes the factor-revealing LP for the Online $b$-Matching problem, as defined in~\LP-\eqref{lp:bal}, which is parameterized with an integer $N$ with $N \gg 1$.  

\subsection{Formulating $\LP_B(N)$ as a Variational-Calculus  Instance}\label{sec:eq-b} 
\begin{lemma}\label{lem:10-16-a}
When $N \to \infty$, $\LP_B(N)$ defined in~\LP-\eqref{lp:bal} becomes equivalent to the variational-calculus instance shown in~\eqref{vc:bal}.
\end{lemma}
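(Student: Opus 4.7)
The plan is to establish the equivalence by constructing a natural correspondence between feasible solutions $\x=(x_i)\in[0,1]^N$ of $\LP_B(N)$ and feasible functions $g:[0,1]\to \mathbb{R}_{\ge 0}$ of the variational instance~\eqref{vc:bal}, and then matching the two optimal values through a pair of one-sided bounds as $N \to \infty$. Let $v^\star$ denote the optimal value of~\eqref{vc:bal}; the goal is to prove $\lim_{N\to\infty}\val(\LP_B(N)) = v^\star$.

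The underlying change of variables is $t = i/N$ together with $g(t)\sim N\cdot x_{\lceil tN\rceil}$, so that $g(t)\,\sd t$ plays the role of the discrete mass $x_i$. Under this correspondence, the objective $\sum_i x_i(1-i/N)$ is a Riemann sum for $\int_0^1 g(t)(1-t)\,\sd t$, and the $p$-th LP constraint, evaluated at $p=\lceil tN\rceil$, is a Riemann sum for $\int_0^t g(z)(1-z+t)\,\sd z \le t$. Two structural facts drive the limit: (i) substituting $p=N$ into the LP constraint yields $\sum_i x_i \le 1$, which translates to the uniform $L^1$ bound $\int_0^1 g_N \le 1$ on the step functions $g_N(t):=N\cdot x^{*}_{N,\lceil tN\rceil}$ built from optimal LP solutions $\x^*_N$; and (ii) the kernel $(1-z+t)$ is jointly continuous on $[0,1]^2$, so the Riemann-sum error in each constraint is uniformly $O(1/N)$.

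For the upper-bound direction $\limsup_N \val(\LP_B(N)) \le v^\star$, I would work with the cumulative integrals $G_N(t):=\int_0^t g_N$, which are nondecreasing, bounded by $1$, and absolutely continuous, and extract a pointwise subsequential limit $G^*$ via Helly's selection theorem. The a.e.\ derivative $g^*:=(G^*)'$ is the candidate limit; feasibility of $g^*$ for~\eqref{vc:bal} follows by rewriting each integral constraint as a Stieltjes integral against the continuous kernel $(1-z+t)$ and passing to the limit, and its objective equals the limit of LP objectives by the same device. For the reverse direction $\liminf_N \val(\LP_B(N)) \ge v^\star$, starting from a near-optimal feasible $g^*$ for~\eqref{vc:bal} (reduced, without loss of generality, to a continuous bounded $g^*$ by a density argument), I would set $x_i^{(N)}:=g^*(i/N)/N$ or the cell-averaged variant, verify that the constraints hold up to an additive slack of order $O(1/N)$ by uniform Riemann-sum convergence, and then rescale the entire vector $\x^{(N)}$ by a factor $1-O(1/N)$ to restore exact feasibility while losing only $o(1)$ in the objective.

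The main obstacle is the absence of a uniform $L^\infty$ bound on the step functions $g_N$: the pointwise constraint $x_i\le 1$ yields only $g_N\le N$, so weak-$L^p$ compactness does not apply directly to the $g_N$ themselves. My plan is to route through the antiderivatives $G_N$, where compactness is automatic from monotonicity together with the uniform $L^1$ bound; all limit passages in the constraints and the objective can then be expressed as integrations against continuous kernels in the Stieltjes sense, avoiding any direct pointwise or weak limit on $g_N$. A secondary subtlety is that the LP has only $N$ constraints while the variational instance has a continuum indexed by $t\in[0,1]$, but this is harmless: for any $g^*\in L^1$, the map $t\mapsto \int_0^t g^*(z)(1-z+t)\,\sd z$ is continuous in $t$, so validity at the dense set $\{p/N:p\in[N]\}$ with $o(1)$ slack upgrades, in the limit, to validity for every $t\in[0,1]$.
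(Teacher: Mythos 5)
Your proposal takes a genuinely different, and considerably more rigorous, route than the paper. The paper's own proof of this lemma is a formal substitution: it sets $g(i/N)=N\cdot x_i$, observes that the objective $\frac{1}{N}\sum_i g(i/N)(1-i/N)$ and the constraint sums $\frac{1}{N}\sum_{i\le \lfloor tN\rfloor} g(i/N)(1+\lfloor tN\rfloor/N-i/N)$ are Riemann sums for the corresponding integrals, and declares the two programs equivalent in the limit --- with no compactness argument, no separate treatment of the two inequalities between the optimal values, and no discussion of how a continuum of constraints emerges from $N$ discrete ones. Your two-sided architecture ($\limsup_N\val(\LP_B(N))\le v^\star$ via a compactness/limit argument on optimal LP solutions, and $\liminf_N\val(\LP_B(N))\ge v^\star$ via discretizing a near-optimal $g$ and rescaling to restore feasibility) is exactly what a rigorous version of the lemma requires, and your observations about the uniform $L^1$ bound from the $p=N$ constraint, the continuity of the kernel $(1-z+t)$, and the density upgrade from $\{p/N\}$ to all of $[0,1]$ are all correct and well placed.

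There is, however, one soft spot in the $\limsup$ direction. Helly's theorem gives a nondecreasing pointwise limit $G^*$ of the cumulative functions $G_N$, but it does not give absolute continuity of $G^*$. If mass concentrates in the discrete optimizers, $G^*$ acquires a singular part, and then the a.e.\ derivative $g^*=(G^*)'$ satisfies $\int_0^1(1-t)g^*(t)\,\sd t<\lim_N\int_0^1(1-t)\,\sd G_N(t)$ --- so exhibiting $g^*$ as a feasible function for~\eqref{vc:bal} bounds the wrong side, and your claim that its objective ``equals the limit of LP objectives by the same device'' fails. Concentration is not excluded a priori by the constraints: a point mass of size $1/2$ at $t=1/2$ satisfies $\int_{[0,t]}(1-z+t)\,\sd\mu(z)\le t$ for every $t$. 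Two repairs are available. The cleaner one is to integrate by parts and never differentiate: the constraint becomes $G^*(t)+\int_0^t G^*(z)\,\sd z\le t$ and the objective becomes $\int_0^1 G^*(t)\,\sd t$, both of which pass to the Helly limit directly and match the reformulation the paper itself uses in Theorem~\ref{thm:bal-a}; one then checks that the supremum of this $G$-level program is attained by an absolutely continuous $G$ (it is, with $g=\sfe^{-t}$), so it coincides with $v^\star$. Alternatively, show the measure-valued relaxation of~\eqref{vc:bal} has the same value as the function-valued instance by smearing any atom slightly to the right, which weakens every constraint (the kernel $1-z+t$ is decreasing in $z$, and mass pushed past $t$ leaves the constraint at $t$) at an $O(\varepsilon)$ cost in the objective. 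With either patch your argument goes through.
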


\begin{proof}
For each $i \in [N]$, let $g(i/N)=N \cdot x_i$, where $g(t)$ is a  function over $t \in [0,1]$. Observe that for the objective of $\LP_B(N)$ as shown in~\LP-\eqref{lp:bal}:
\begin{align*}
 \sum_{i \in [N]} x_i \cdot \bp{1-\frac{i}{N}} &=\frac{1}{N}\sum_{i \in [N]} g(i/N) \cdot \bp{1-\frac{i}{N}}  \longrightarrow  \int_0^1 g(t) (1-t) \sd t,~~\sbp{\mbox{when $N \to \infty$}}.
\end{align*}
As for the constraint of $\LP_B(N)$  in~\LP-\eqref{lp:bal},
\begin{align*}
 &\sum_{1 \le i \le p} x_i \cdot \bp{1+\frac{p-i}{N}} \le  \frac{p}{N},  \forall p \in [N]  \Leftrightarrow
 \sum_{1 \le i \le p} x_i \cdot\bp{1+\frac{p-i}{N}} \le  \frac{p}{N},  p=\lfloor t \cdot N \rfloor, \forall t \in [0,1].\\
  \Leftrightarrow&
 \frac{1}{N}\sum_{1 \le i \le \lfloor t \cdot N \rfloor} g(i/N) \cdot \bp{1+\lfloor t \cdot N \rfloor/N-i/N} \le  \lfloor t \cdot N \rfloor/N, \forall t \in [0,1]. \\
   \Leftrightarrow&  \int_0^t  g(z) \sbp{1+t-z} \sd z \le t,~~  \forall t \in [0,1]~~\sbp{\mbox{when $N \to \infty$}}.
 \end{align*}
 Wrapping up the analysis above, we establish the claim.
 \end{proof}

\subsection{Solving the Variational-Calculus Instance~\eqref{vc:bal} Optimally}\label{sec:29-b} 
We can apply standard variational-calculus techniques to optimally solve the instance~\eqref{vc:bal}. However, since our primary interest lies in the optimal objective value rather than the specific optimal solution or its uniqueness, a simpler ordinary differential equation (ODE)-based approach suffices in our case and is also adequate for handling the case of \rank.

\begin{theorem}\label{thm:bal-a}
The variational-calculus instance~\eqref{vc:bal} has an optimal objective value of $1/\sfe$.
\end{theorem}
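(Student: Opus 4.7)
My plan is to follow the ODE-based heuristic flagged earlier in the paper: set the structural constraint of instance~\eqref{vc:bal} to equality, deduce a first-order ODE for the candidate maximizer, and then certify optimality via a Lagrange multiplier. Concretely, introduce $H(t) := \int_0^t g(z)\, \sd z$ and rewrite the constraint as $\int_0^t g(z)(1-z)\, \sd z + t\,H(t) \le t$. Forcing equality and differentiating in $t$ (by Leibniz's rule, the integrand evaluated at $z=t$ contributes $g(t)\cdot 1$) yields the pointwise identity $g(t) + H(t) = 1$. Together with $H' = g$ and $H(0)=0$, this reduces to $g'+g=0$ with $g(0)=1$, whose unique solution is $g^*(t) = \sfe^{-t}$. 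The objective then evaluates to $\int_0^1 \sfe^{-t}(1-t)\, \sd t = [t\,\sfe^{-t}]_0^1 = 1/\sfe$, since $\tfrac{\sd}{\sd t}(t\,\sfe^{-t}) = \sfe^{-t}(1-t)$. Feasibility of $g^*$ is immediate: by construction the constraint holds with equality everywhere, and $g^*\ge 0$, which delivers the lower bound $\val(\eqref{vc:bal}) \ge 1/\sfe$.

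The genuinely non-trivial step is the matching upper bound, where the ODE heuristic alone provides no guarantee. My plan is a weak-duality argument: for any nonnegative multiplier $\lambda(t)$ and any feasible $g$,
\[
\int_0^1 g(t)(1-t)\, \sd t \;\le\; \int_0^1 g(t)(1-t)\, \sd t \;+\; \int_0^1 \lambda(t)\left[t - \int_0^t g(z)(1-z+t)\, \sd z\right]\sd t.
\]
After swapping the order of integration (justified by Fubini and nonnegativity), the double integral rearranges to $\int_0^1 g(z)\,\phi(z)\, \sd z$ where $\phi(z) := (1-z) - \int_z^1 \lambda(t)(1-z+t)\, \sd t$. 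To annihilate the coefficient of $g$, I require $\phi \equiv 0$. Differentiating in $z$ collapses this identity into the linear equation $\lambda(z) + \Lambda(z) = 1$, where $\Lambda(z) := \int_z^1 \lambda(\tau)\, \sd\tau$ with terminal condition $\Lambda(1)=0$; the unique solution is $\lambda(z) = \sfe^{z-1} \ge 0$. With this choice the Lagrangian collapses to the $g$-free quantity $\int_0^1 \sfe^{t-1}\,t\, \sd t = [(t-1)\sfe^{t-1}]_0^1 = 1/\sfe$, giving the matching upper bound and hence $\val(\eqref{vc:bal}) = 1/\sfe$.

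The main obstacle is precisely identifying and verifying the multiplier $\lambda$; once the dual ODE is set up, its solution is routine, and verifying $\phi\equiv 0$ together with the Fubini swap is mechanical. Conceptually this is the KKT shadow of the primal ODE trick: the tight primal constraint for $g^*$ corresponds to a strictly positive dual variable $\lambda^* > 0$ everywhere on $[0,1]$, and the two calculations fit together to pin the optimal value at exactly $1/\sfe$.
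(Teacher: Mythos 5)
Your proof is correct, and its second half takes a genuinely different route from the paper's. For the lower bound you and the paper construct the same extremal object (your $g^*(t)=\sfe^{-t}$ integrates twice to the paper's $v^*(t)=\sfe^{-t}-(1-t)$), so that part is a change of coordinates rather than a change of idea. The divergence is in the upper bound. The paper substitutes $u(t)=\int_0^t g$, then $v(t)=\int_0^t u$, observes that the objective collapses to the single boundary value $v(1)$ while the constraint becomes the differential inequality $v(t)+v'(t)\le t$, and then runs an integrating-factor (Gr\"onwall-type) comparison: $h:=v^*-v$ satisfies $h'+h\ge 0$ with $h(0)=0$, so $(h\sfe^t)'\ge 0$ and $v\le v^*$ pointwise, giving $v(1)\le v^*(1)=1/\sfe$. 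You instead exhibit an explicit dual certificate $\lambda(t)=\sfe^{t-1}>0$, use Fubini to rearrange the Lagrangian so that the coefficient $\phi(z)$ of $g(z)$ vanishes identically (your computation $\phi'(z)=\lambda(z)+\Lambda(z)-1$ with $\phi(1)=0$ checks out), and read off the bound $\int_0^1 t\,\sfe^{t-1}\,\sd t=1/\sfe$. Your argument is the continuum analogue of the primal--dual method the paper is deliberately sidestepping, but in the limit the dual ODE is trivial to solve, so nothing is painstaking about it; it also makes complementary slackness visible and would generalize to objectives that are not reducible to a single boundary value. The paper's comparison argument is more elementary (no Fubini, no multiplier to guess) but leans on the special structure that two antiderivative substitutions turn the objective into $v(1)$. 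One small caveat for your write-up: the Fubini/Tonelli swap and the feasibility of $g^*$ both implicitly use $g\ge 0$, which is not written in instance~\eqref{vc:bal} but is inherited from $x_i\in[0,1]^N$ in \LP-\eqref{lp:bal} (the paper likewise uses $u'(t)=g(t)\ge 0$ silently); you should state it.
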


Note that Lemma~\ref{lem:10-16-a} and Theorem~\ref{thm:bal-a} together lead to Theorem~\ref{thm:bal}.

\begin{proof}[Proof of Theorem~\ref{thm:bal-a}]
Let $u(t):=\int_0^t g(z) \sd z$ for $t \in [0,1]$. Thus, we have that $u(0)=0$, $u(1)=\int_0^1 g(z) \sd z$, and $u'(t)=g(t) \ge 0$. Observe that the constraint of Program~\eqref{vc:bal} with $t=1$ suggests that   
\begin{align*}
\int_0^1 g(z) \sbp{2-z} \sd z \le 1 \Rightarrow \int_0^1 g(z) \sd z \le \int_0^1 g(z) \sbp{2-z} \sd z \le 1  \Rightarrow  u(1) \le 1.
\end{align*}
As a result, $u(t) \in [0,1]$ when $t \in [0,1]$.  The objective of Program~\eqref{vc:bal} is 
\[
\int_0^1 g(t) (1-t) \sd t  =\int_0^1 u'(t) (1-t) \sd t=\int_0^1 u(t) \sd t.
\]
Additionally, the constraint of Program~\eqref{vc:bal} is 
\begin{align*}
 \int_0^t  g(z) \sbp{1-z+t} \sd z &= \int_0^t  u'(z) \sbp{1-z+t} \sd z=u(t)+ \int_0^t u(z) \sd z.
\end{align*}
Thus, Program~\eqref{vc:bal} can be stated alternatively as follows:
\begin{align} \label{vc:balb}
\max_{u: [0,1] \to [0,1]}  \int_0^1 u(t) \sd t:~~
 u(t)+  \int_0^t u(z) \sd z\le t, \forall t \in [0,1]; u(0)=0; u'(t) \ge 0, \forall t\in [0,1].
\end{align}

Let $v(t):= \int_0^t u(z) \sd z$ for $t \in [0,1]$. We can re-state Program~\eqref{vc:balb} as follows:
\begin{align}\label{vc:balc}
\max_{v: [0,1] \to [0,1]} & v(1):\\
& v(t)+v'(t)\le t, \forall t \in [0,1]; \nonumber \\
& v(0)=v'(0)=0; 0 \le v'(t) \le 1, v''(t) \ge 0, \forall t \in [0,1].\nonumber
\end{align}

Let $v^*(t)$ be the unique function satisfying $v(t)+v'(t)=t$ for all $t \in [0,1]$ with $v(0)=0$ such that $v^*(t)=\sfe^{-t}-(1-t)$. We can verify that $v^*(t)$ is feasible to Program~\eqref{vc:balc}. Now, we show that for any feasible solution $v(t)$ to Program~\eqref{vc:balc}, we have $v(t) \le v^*(t)$ for all $t \in [0,1]$. Let $h(t):=v^*(t)-v(t)$ with $h(0)=0$. Observe that 
\begin{align*}
v(t)+v'(t) \le t=v^*(t)+\frac{d v^*(t)}{dt}, \forall t \in [0,1] & \Rightarrow h'(t)+h(t) \ge 0, \forall t \in [0,1]; \\
& \Rightarrow \sbp{h(t) \cdot \sfe^t}' \ge 0,  \forall t \in [0,1],
\end{align*}
which implies that the function $h(t) \cdot \sfe^t$ is non-decreasing over $t \in [0,1]$. Since $h(t) \cdot \sfe^t=0$ when $t=0$, we claim $h(t) \cdot \sfe^t \ge 0$ over $t\in [0,1]$, which implies $h(t) \ge 0$ and thus, $v^*(t) \ge v(t)$ for all $t \in [0,1]$. Therefore, for any feasible solution $v(t)$ to Program~\eqref{vc:balc}, $v(1) \le v^*(1)=1/\sfe$. Since $v^*(t)$ is feasible as well, we establish our claim. 
\end{proof}

\section{Resolving the Policy-Revealing LPs of the Classical Secretary Problem via a Variational-Calculus Approach}\label{sec:sp}

\subsection{Formulating $\LP_S(n)$ as a Variational-Calculus  Instance}\label{sec:sp-a} 
Recall that $\LP_{S}(n)$ denotes the policy-revealing LP for the classical secretary problem as shown in \LP-\eqref{lp:sp}, which is parameterized with an integer $n \ge 1$ representing the total number of candidates.

\begin{lemma}\label{lem:10-16-a}
When $n \to \infty$, the policy-revealing  $\LP_S(n)$ in~\LP-\eqref{lp:sp} becomes equivalent to the variational-calculus instance~\eqref{vc:sp}.
\end{lemma}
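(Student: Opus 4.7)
The plan is to mirror the approach used to establish the analogous equivalence for $\LP_B$ in Section~\ref{sec:eq-b}: introduce a substitution that turns the LP's discrete variables into samples of a continuous function on $[0,1]$, then show that the objective and constraint of $\LP_S(n)$ both converge to their variational-instance counterparts as $n \to \infty$. Motivated by the remark following Theorem~\ref{thm:sp}, the natural choice of substitution is $g(i/n) := x_i \cdot i$ for each $i \in [n]$, or equivalently $x_i = g(i/n)/i$.

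Under this correspondence, I would first handle the objective, noting that
\[
\sum_{i \in [n]} x_i \cdot (i/n) = \frac{1}{n} \sum_{i \in [n]} g(i/n) \longrightarrow \int_0^1 g(t) \, \sd t \quad \text{as } n \to \infty,
\]
by the standard Riemann-sum limit. Next, for the main constraint, I would rewrite
\[
\sum_{1 \le \ell < i} x_\ell = \sum_{1 \le \ell < i} \frac{g(\ell/n)}{\ell} = \frac{1}{n}\sum_{1 \le \ell < i} \frac{g(\ell/n)}{\ell/n},
\]
which is a Riemann sum for $\int_0^{i/n} g(z)/z \, \sd z$. Setting $i = \lfloor t \cdot n \rfloor$ and letting $n \to \infty$, the LP constraint $x_i \cdot i \le 1 - \sum_{\ell < i} x_\ell$ becomes $g(t) \le 1 - \int_0^t g(z)/z \, \sd z$ for all $t \in [0,1]$. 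The lower bound $g(t) \ge 0$ is inherited from $x_i \ge 0$, and the upper bound $g(t) \le 1$ follows automatically once the main variational constraint is in place, since $g \ge 0$ forces the integral term to be nonnegative.

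The principal technical obstacle is the singularity of the integrand $g(z)/z$ at $z = 0$. Because the Riemann sum above begins at $\ell = 1$ (i.e., $z = 1/n$), its convergence to $\int_0^{t} g(z)/z \, \sd z$ requires $g(z)/z$ to be Lebesgue integrable on $[0,t]$ for every $t \in (0,1]$. I would dispatch this concern by observing that feasibility in the variational instance already forces $\int_0^t g(z)/z \, \sd z \le 1$ via the main constraint itself, which ensures integrability and legitimizes the Riemann-sum-to-integral passage. The reverse direction, converting a feasible continuous $g$ into discrete variables $x_i = g(i/n)/i$ and verifying that the LP constraints hold asymptotically, leans on the same integrability to control the approximation error uniformly in $i$. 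Combining both directions then yields the claimed equivalence in the limit.
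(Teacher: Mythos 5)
Your proposal follows essentially the same route as the paper's proof: the substitution $g(i/n) := i\cdot x_i$, the Riemann-sum limit $\frac{1}{n}\sum_i g(i/n) \to \int_0^1 g(t)\,\sd t$ for the objective, and the rewriting of $\sum_{\ell < i} x_\ell$ as $\frac{1}{n}\sum_{\ell<i} \frac{g(\ell/n)}{\ell/n} \to \int_0^t g(z)/z\,\sd z$ for the constraint. Your added attention to the integrability of $g(z)/z$ near $z=0$ is a point the paper only addresses in a later footnote (where it assumes Lebesgue integrability of $g(z)/z$), so it is a welcome refinement rather than a departure.
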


\begin{proof}
For each $i \in [n]$, let $g(i/n)=i \cdot x_i \in [0,1]$, where $g(t)$ is a function with domain and range both on $[0,1]$. As a result, the objective of $\LP_S(n)$ when $n \to \infty$ can be approximated as
\[
\sum_{i \in [n]} x_i \cdot (i/n)=\frac{1}{n}\sum_{i \in [n]} g(i/n) \longrightarrow \int_0^1 g(t) \sd t, ~~\sbp{\mbox{when $n \to \infty$}}.
\]

As for the constraint of $\LP_S(n)$  in~\LP-\eqref{lp:sp},
\begin{align*}
 & x_i  \cdot i  \le 1-\sum_{1 \le \ell<i} x_\ell,  \forall i \in [n]  \Leftrightarrow
 x_i  \cdot i  \le 1-\sum_{1 \le \ell<i} x_\ell,   i=\lfloor t \cdot n \rfloor, \forall t \in [0,1].\\
  \Leftrightarrow&
g(\lfloor t \cdot n \rfloor/n) \le 1- \frac{1}{n}\sum_{1 \le \ell<\lfloor t \cdot n \rfloor} \frac{g(\ell/n)}{\ell/n}, \forall t \in [0,1]. \\
   \Leftrightarrow& g(t) \le 1-\int_0^t \frac{g(z)}{z} \sd z,~~  \forall t \in [0,1]~~\sbp{\mbox{when $n \to \infty$}}.
 \end{align*}
 The constraint $g(t) \ge 0$ for all $t \in [0,1]$ is valid due to the definition of $g(t)$. Thus, we establish our claim.
\end{proof}

\subsection{Solving the Variational-Calculus Instance~\eqref{vc:sp} Optimally}\label{sec:29-c}
Recall that, in our case, we are concerned with not only the optimal objective value but also  the corresponding optimal solution and its uniqueness. In this section, we apply the classical Lagrange-Multipliers method to first derive the necessary conditions that any optimizer should satisfy and then determine the unique maximizer from those conditions.

Let $u(t):=\int_0^t g(z)/z~\sd z$, and $\du(t)=g(t)/t$.\footnote{Note that by the well-known Newton-Leibniz formula, we  have that \( u(t) \) is \emph{absolutely continuous} over \( [0,1] \), as we can safely assume \( g(z)/z \) is Lebesgue integrable over \( [0,1] \). Meanwhile, the equality \( \du(t) = g(t)/t \) actually holds almost everywhere on \( [0,1] \) and everywhere on \( [0,1] \) if \( g(t)/t \) is continuous over \( [0,1] \).} 
We can re-write the variational instance~\eqref{vc:sp} as follows:
\begin{align}\label{vc:sp-b}
\max_{u(t):t\in [0,1]} &~\int_0^1 \du(t) \cdot t~\sd t:~~u(t)+\du(t) \cdot t \le 1, \forall t \in [0,1]; u(0)=0, \du(t) \ge 0,\forall t\in [0,1].\footnotemark
\end{align}
\footnotetext{Note that for this variational instance, we omit implicit constraints on $u(t)$, such as (1) $u(t) \ge 0$ over $t \in [0,1]$, since $\du(t) \ge 0$ for all $t \in [0,1]$ and $u(0)=0$; and (2) $u(t) \le 1$ over $t \in [0,1]$, since $\du(t) \ge 0$ and $u(t) + \du(t) \cdot t \le 1$ for all $t \in [0,1]$.}

\begin{theorem}\label{thm:10-20-a}
There is a unique maximizer of $u^*$ to the variational instance~\eqref{vc:sp-b}, where
\[
u^*(t)=0, t \in [0,1/\sfe], u^*(t)=1-(1/\sfe)/t, t \in [1/\sfe,1].
\]
\end{theorem}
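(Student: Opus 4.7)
The plan is to apply the method of Lagrange multipliers (in KKT / weak-duality form) adapted to the inequality constraint. Specifically, I introduce a non-negative multiplier function $\lambda(t)\ge 0$ associated with the pointwise constraint $u(t)+\du(t)\, t\le 1$. Since $\lambda(t)\bigl[1-u(t)-\du(t)\, t\bigr]\ge 0$ pointwise for any feasible $u$, we obtain the weak-duality bound
\[
\int_0^1 \du(t)\, t\, \sd t \;\le\; \int_0^1 \du(t)\, t\, \sd t \;+\; \int_0^1 \lambda(t)\bigl[1-u(t)-\du(t)\, t\bigr]\,\sd t,
\]
and the strategy is to choose $\lambda$ so that the right-hand side collapses (after integration by parts and use of $u(0)=0$) to the constant $1/\sfe$.

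To guess the right $\lambda$, I would invoke the standard KKT necessary conditions. On any sub-interval where the constraint $u+\du\, t \le 1$ is active, stationarity of the Lagrangian with respect to $u$ forces the ODE $\lambda'(t)\, t = 1$, giving $\lambda(t)=\ln t + c$; on the inactive region, complementary slackness forces $\lambda=0$. Continuity of $\lambda$ at the transition point $t_0$ then yields $c=-\ln t_0$, and the transversality condition $\lambda(1)=1$ (forced by the free endpoint $u(1)$, so that the integration-by-parts boundary term vanishes) uniquely fixes $t_0 = 1/\sfe$. I therefore take
\[
\lambda(t) \;=\; \begin{cases} 0, & t\in[0,1/\sfe],\\ 1+\ln t, & t\in[1/\sfe,1], \end{cases}
\]
which is non-negative, as required.

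With this $\lambda$ fixed, the plan is to integrate $\int_0^1 \du(t)\, t\bigl(1-\lambda(t)\bigr)\,\sd t$ by parts (using $u(0)=0$ and $\lambda(1)=1$ to annihilate the boundary term at both endpoints). A routine calculation then collapses the weak-duality upper bound to
\[
-\int_0^{1/\sfe} u(t)\,\sd t \;+\; \int_{1/\sfe}^1 (1+\ln t)\,\sd t \;=\; -\int_0^{1/\sfe} u(t)\,\sd t \;+\; \frac{1}{\sfe}.
\]
Because $\du\ge 0$ and $u(0)=0$ force $u\ge 0$ on $[0,1]$, the first term is non-positive, giving the bound $J[u]\le 1/\sfe$. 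For uniqueness, equality requires (i) $u\equiv 0$ on $[0,1/\sfe]$, and (ii) $u(t)+\du(t)\, t = 1$ almost everywhere on $(1/\sfe,1]$ by complementary slackness. Absolute continuity of $u$ then yields $u(1/\sfe)=0$, and the first-order linear ODE in (ii) admits the unique solution $u(t)=1-(1/\sfe)/t$ on $[1/\sfe,1]$, matching the claimed $u^*$.

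The main obstacle is the correct identification of $\lambda$ and of the transition point $t_0$: the stationarity ODE, continuity of $\lambda$ at $t_0$, and transversality at the free endpoint $t=1$ must all hold simultaneously, and it is the transversality condition that uniquely selects $t_0 = 1/\sfe$. A secondary subtlety is that the implicit constraint $\du\ge 0$ enters the argument only indirectly, through the non-negativity of $u$ used in the final inequality. Once $\lambda$ is in hand, the upper bound is a routine integration by parts and the uniqueness argument reduces to solving a first-order linear ODE on $[1/\sfe,1]$.
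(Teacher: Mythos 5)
Your proposal is correct, but it takes a genuinely different route from the paper. The paper converts the inequality constraints to equalities via slack variables $v^2,w^2$, applies the Lagrange-multiplier formalism to derive pointwise stationarity conditions (Lemma~\ref{lem:25-a}), deduces that any optimizer must alternate between regions where $\du=0$ and regions where $u+\du\, t=1$, and then parameterizes all such candidates by a finite sequence of intervals $\mathbf{s}=(a_1,b_1,\ldots,a_K,b_K)$ and maximizes over that family (Lemma~\ref{lem:26-a}). You instead exhibit an explicit dual certificate $\lambda(t)=\max\{0,\,1+\ln t\}$ and prove the bound $J[u]\le 1/\sfe$ for \emph{every} feasible $u$ by a single integration by parts (I checked the computation: with $\phi(t)=t(1-\lambda(t))$ one has $\phi(0)=\phi(1)=0$, $\phi$ continuous at $1/\sfe$, and the bound collapses to $1/\sfe-\int_0^{1/\sfe}u\,\sd t\le 1/\sfe$), with uniqueness falling out of the equality conditions. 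What your approach buys is significant: the paper's enumeration implicitly assumes the optimizer has a \emph{finite} interval structure $K<\infty$, a structural hypothesis your certificate argument does not need, since weak duality bounds arbitrary feasible $u$ and complementary slackness plus $u(1/\sfe)=0$ forces $u=u^*$ with no a priori decomposition. What the paper's approach buys is that the optimizer's form is \emph{derived} rather than guessed, though your KKT heuristic ($t\lambda'=1$ on the active set, continuity at $t_0$, transversality $\lambda(1)=1$) makes the guess systematic. Two small items you should add for completeness: (i) verify that $u^*$ is itself feasible and attains $\int_0^1 \du^*(t)\,t\,\sd t = (1/\sfe)\int_{1/\sfe}^1 t^{-1}\,\sd t = 1/\sfe$, so the maximum is achieved; and (ii) note that $\int_0^{1/\sfe}u\,\sd t=0$ together with $u\ge 0$ gives $u\equiv 0$ on $[0,1/\sfe]$ because $u$ is (absolutely) continuous and non-decreasing, which is also what legitimizes integrating the a.e.\ identity $(tu)'=1$ on $[1/\sfe,1]$ in your uniqueness step.
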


Observe that Lemma~\ref{lem:10-16-a} and Theorem~\ref{thm:10-20-a} together lead to the main Theorem~\ref{thm:sp}. By introducing auxiliary variables, we reformulate the inequality constraints in the instance~\eqref{vc:sp-b} as equalities, as shown below:
\begin{align}\label{eqn:7-25-a}
\max_{u, v, w} &~\int_0^1 w^2\cdot t~\sd t \\
& u(t)+w^2 (t) \cdot t+v^2(t)= 1, \forall t \in [0,1]. \label{eqn:7-25-b} \\
&\du(t)-w^2(t)=0, \forall t \in [0,1].  \label{eqn:7-25-c}\\
& u(0)=0.\label{eqn:7-25-d}
\end{align}
Now, we apply the classical Lagrange-Multipliers method to convert the constrained version above to an {unconstrained} one:
\begin{align*}
\widehat{J}[u,v,w,\mu_1, \mu_2]=\int_0^1 \sd t~\cdot \bp{w^2\cdot t+\mu_1 \cdot \bb{u+w^2 \cdot t+v^2-1}+\mu_2 \cdot \bb{\du-w^2}},
\end{align*}
where $\mu_1(t)$ and $\mu_2 (t)$ are two multipliers associated with the two equality constraints. 

\begin{lemma}\label{lem:25-a}
Any optimizer of the  instance~\eqref{eqn:7-25-a}-\eqref{eqn:7-25-d} must satisfy the following conditions:
\begin{align}
 \dot{\mu}_2-\mu_1& \equiv 0,\label{eqn:7-25-2}\\
 v \cdot \mu_1& \equiv 0,\label{eqn:7-25-3}\\
 w \cdot \bb{\mu_2-t(1+\mu_1)}& \equiv 0.\label{eqn:7-25-4}
\end{align}
\end{lemma}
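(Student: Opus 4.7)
The plan is to treat $\widehat{J}$ as a variational instance with five dependent variables $u, v, w, \mu_1, \mu_2$ and the single independent variable $t$, and to apply the Euler--Lagrange equation with respect to each of them. Writing the Lagrangian as
\[
L(t, u, v, w, \mu_1, \mu_2, \du) = w^2 t + \mu_1(u + w^2 t + v^2 - 1) + \mu_2(\du - w^2),
\]
we observe that derivatives appear only through $\du$. Hence for $y \in \{v, w, \mu_1, \mu_2\}$ the Euler--Lagrange condition collapses to the algebraic equation $\partial L/\partial y = 0$, while the equation for $u$ takes the full form $\partial L/\partial u - \frac{d}{dt}(\partial L/\partial \du) = 0$.

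First I would derive the equation for $u$: since $\partial L/\partial u = \mu_1$ and $\partial L/\partial \du = \mu_2$, the Euler--Lagrange equation reads $\mu_1 - \dot{\mu}_2 = 0$, i.e. exactly~\eqref{eqn:7-25-2}. Next, for $v$, computing $\partial L/\partial v = 2 v \mu_1$ and equating to zero yields $v \mu_1 \equiv 0$, which is~\eqref{eqn:7-25-3}. For $w$, the $w$-dependent part of $L$ gathers to $w^2\,[\,t(1+\mu_1) - \mu_2\,]$, so $\partial L/\partial w = 2w\,[\,t(1+\mu_1) - \mu_2\,] = 0$, which rearranges to~\eqref{eqn:7-25-4}. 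The Euler--Lagrange equations for the multipliers $\mu_1$ and $\mu_2$ merely reproduce the original equality constraints~\eqref{eqn:7-25-b}--\eqref{eqn:7-25-c}, which are automatic at any feasible point and are therefore not part of the lemma's statement.

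The main obstacle is to rigorously justify that every constrained optimizer of~\eqref{eqn:7-25-a}--\eqref{eqn:7-25-d} is indeed a critical point of the unconstrained functional $\widehat{J}$. This is the classical Lagrange-multiplier theorem for variational problems with one holonomic constraint~\eqref{eqn:7-25-b} and one non-holonomic constraint~\eqref{eqn:7-25-c}; invoking it requires checking a constraint-qualification condition on the joint gradient of the two constraints with respect to $(v, w, \du)$, which holds away from the degenerate locus $v = w = 0$. A second technicality is the boundary term from integrating $\delta u \cdot \mu_2$ by parts when deriving the $u$-equation: since only $u(0) = 0$ is pinned and $u(1)$ is free, admissible variations $\delta u$ need not vanish at $t = 1$, so one must impose the natural (transversality) boundary condition $\mu_2(1) = 0$ to make the first variation vanish. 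Once these standard ingredients are in place, the three pointwise identities~\eqref{eqn:7-25-2}--\eqref{eqn:7-25-4} follow directly from the Euler--Lagrange computations above, and can later be combined with the constraints~\eqref{eqn:7-25-b}--\eqref{eqn:7-25-c} to pin down the unique maximizer $u^*$ in Theorem~\ref{thm:10-20-a}.
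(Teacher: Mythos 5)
Your proposal is correct and follows essentially the same route as the paper: the paper computes the first variation $h'(0)$ of the augmented functional $\widehat{J}$ along independent directions $\vph,\psi,\eta$, integrates the $\mu_2\dot{\vph}$ term by parts, and invokes the Fundamental Lemma of the Calculus of Variations, which is exactly the Euler--Lagrange computation you carry out variable by variable. Your two added technical remarks are sound and in fact slightly more careful than the paper's own treatment, which leaves the free-endpoint boundary term $\mu_2(1)\vph(1)$ and the constraint-qualification hypothesis of the Lagrange-multiplier theorem implicit.
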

In the lemma above, we omit the variable \( t \) from each of the functions \( \mu_1 \), \( \mu_2 \), \( v \), and \( w \). Conditions~\eqref{eqn:7-25-2}-\eqref{eqn:7-25-4} imply that the corresponding functions are equal to zero pointwise over \( t \in [0,1] \). We leave the proof of Lemma~\ref{lem:25-a} to the next Section~\ref{sec:31-a}. Now, we show how Lemma~\ref{lem:25-a}  leads to Theorem~\ref{thm:10-20-a}.

\begin{proof}[Proof of Theorem~\ref{thm:10-20-a}]
We begin with the necessary conditions imposed on any optimizer (whether a minimizer or maximizer), as stated in Lemma~\ref{lem:25-a}. From Condition~\eqref{eqn:7-25-4}, we conclude that either \( w = 0 \) or \( \mu_2 = t(1 + \mu_1) \). Observe that for any interval \( [a, b] \) with \( 0 \leq a < b \leq 1 \) such that \( w(t) = 0 \) over \( t \in [a, b] \), we find that \( \du = 0 \) by Constraint~\eqref{eqn:7-25-c}. Thus, \( u(t) \) remains constant over \( [a, b] \).  
Consider any given range $[\hat{a},\hat{b}]$ with $0 \le \hat{a}<\hat{b} \le 1$, where $w\neq 0$, \ie $\du>0$ for all $t \in [\hat{a},\hat{b}]$.  Condition~\eqref{eqn:7-25-4} suggests that $\mu_2=t(1+\mu_1)$ for all $t \in [\hat{a},\hat{b}]$. By Condition~\eqref{eqn:7-25-2}, we have that $\dot{\mu}_2=\mu_1$. Therefore,
\[
\dot{\mu}_2=1+\mu_1+t \cdot \dot{\mu}_1=\mu_1 \Rightarrow
\dot{\mu}_1=-1/t \Rightarrow \mu_1=-\ln t+c,
\]
where $c$ is a constant. Observe that  $\mu_1$ has at most one point equal to zero over $[\hat{a},\hat{b}]$. By Constraint~\eqref{eqn:7-25-3}, we claim that $v=0$ everywhere in $[\hat{a},\hat{b}]$ except for at most one point. Note that by Constraint~\eqref{eqn:7-25-b}, $v=0$ suggests that 
\[
\du \cdot t+u=1.
\]
Wrapping up the analyses above, we claim that any optimizer $u$ to the instance~\eqref{vc:sp-b} must satisfy either $\du=0$ ($u$ takes a constant value) or $\du \cdot t+u=1$. Observe that $u(0)=0$ as stated in Constraint~\eqref{eqn:7-25-d}. Let $[a_1, b_1]$ be the first interval with $0< a_1 \le b_1 \le 1$ such that $u \equiv 0$ for all $0 \le t<a_1$ and (2) $\du \cdot t+u=1$ for all $t \in  [a_1, b_1]$. By the continuity of the function $u$, we have $u(a_1)=0$. As a result, we can solve that
\[
u(t)=1-a_1/t, \forall t \in  [a_1, b_1].
\]

Consider an optimizer $u_\mathbf{s}$ that is parameterized with a sequence $\mathbf{s}:=(a_1,b_1, a_2,b_2, \ldots, a_K, b_K)$ satisfying $0<a_1 < b_1 <  \cdots <a_K< {b}_K \le 1$, such that $\du \cdot t+u=1$ in each interval $[{a}_\ell, {b}_\ell]$ with $\ell \in [K]$, and $\du=0$ otherwise. By the continuity of the function $u_\mathbf{s}$, we have $u_\mathbf{s}(t)=1-{a}_1/t$ for all $t \in [a_1, b_1]$ and $u_\mathbf{s}(t)=1-({a}_1/{b}_1) \cdot {a}_2 /t$ for all  $t \in [a_2, b_2]$. More generally, 
\[
u_\mathbf{s}(t)=1-\prod_{1\le i <\ell} (a_i/b_i) \cdot a_\ell/t, \forall t \in [a_\ell, b_\ell], \forall \ell=1,2,\ldots,K.
\]

For the optimizer $u_{\mathbf{s}}$, the objective of Instance~\eqref{vc:sp-b} is equal to
\begin{align}
\int_0^1 \du_\mathbf{s} (t) \cdot t ~\sd t&=\sum_{\ell =1}^K \int_{{a}_\ell}^{{b}_\ell}\du_\mathbf{s} (t) \cdot t  ~\sd t=\sum_{\ell =1}^K \int_{{a}_\ell}^{{b}_\ell} (1-u_\mathbf{s}(t))~\sd t \nonumber \\
&=\sum_{\ell =1}^K \int_{{a}_\ell}^{{b}_\ell} \bP{\prod_{1\le i <\ell} (a_i/b_i) \cdot a_\ell/t }~\sd t=\sum_{\ell =1}^K \prod_{1\le i <\ell} (a_i/b_i) \cdot a_\ell \ln \sbp{b_\ell/a_\ell}:=g(\mathbf{s}). \label{ineq:26-a}
\end{align}

By Lemma~\ref{lem:26-a}, there exists a unique sequence $\mathbf{s}^*$ that maximizes  $g(\mathbf{s})$, which is $\mathbf{s}^*=(a^*_1=1/\sfe, b^*_1=1)$ with $K=1$. This corresponds to a unique maximizer $u_{\mathbf{s}^*}$ for the instance~\eqref{vc:sp-b}, which takes the form of $
u_{\mathbf{s}^*}(t)=0$ for all $t \in [0,1/\sfe]$, and  $u_{\mathbf{s}^*}(t)=1-(1/\sfe)/t$ for all  $t \in [1/\sfe,1]$.
\end{proof}

\begin{lemma}\label{lem:26-a}
There is a unique sequence $\mathbf{s}^*$ that maximizes $g(\mathbf{s})$ in~\eqref{ineq:26-a}, which is $\mathbf{s}^*=(a^*_1=1/\sfe, b^*_1=1)$.
\end{lemma}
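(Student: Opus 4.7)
My plan is to exploit the natural recursive structure
\[
g(\mathbf{s}) = a_1\ln(b_1/a_1) + (a_1/b_1)\,g(\mathbf{s}'),
\]
where $\mathbf{s}' = (a_2,b_2,\ldots,a_K,b_K)$ is a sequence of the same form but now supported in $(b_1,1]$. This invites a value-function/dynamic-programming style bound. I will introduce the candidate upper envelope
\[
W(c) := \begin{cases} 1/\sfe, & c \in [0, 1/\sfe], \\ -c\ln c, & c \in [1/\sfe, 1], \end{cases}
\]
which is continuous at $c = 1/\sfe$, non-increasing on $[0,1]$, and peaks at value $1/\sfe$. The main target is to prove by induction on $K$ that every admissible $\mathbf{s}$ satisfies $g(\mathbf{s}) \le W(a_1)$, from which $g(\mathbf{s}) \le 1/\sfe$ is immediate.

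For the inductive step, the IH gives $g(\mathbf{s}') \le W(b_1)$, reducing the task to the one-step inequality
\[
a_1\ln(b_1/a_1) + (a_1/b_1)\,W(b_1) \le W(a_1), \qquad 0 < a_1 < b_1 \le 1,
\]
which I plan to handle by splitting on whether $b_1 \ge 1/\sfe$ or $b_1 < 1/\sfe$. In the former case, $(a_1/b_1)W(b_1) = -a_1\ln b_1$, so the $\ln b_1$ terms cancel and the LHS collapses to $-a_1\ln a_1$; the inequality then follows because $-x\ln x$ attains its maximum $1/\sfe$ uniquely at $x = 1/\sfe$. In the latter case, $W(b_1) = W(a_1) = 1/\sfe$, and, after substituting $r := a_1/b_1$ and using $b_1 \le 1/\sfe$, the task reduces to the elementary bound $-r\ln r \le 1-r$ for $r\in(0,1)$. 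I will verify this by noting that $h(r) := 1-r + r\ln r$ satisfies $h(1) = 0$ and $h'(r) = \ln r < 0$ on $(0,1)$; crucially, this bound is \emph{strict} whenever $r < 1$, which I will need later.

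For the uniqueness claim I will trace equality. Assuming $g(\mathbf{s}) = 1/\sfe$, the envelope $W(a_1) = 1/\sfe$ forces $a_1 \le 1/\sfe$; the strict inequality in the case $b_1 < 1/\sfe$ rules that case out, so $b_1 \ge 1/\sfe$, and the collapse $g(\mathbf{s}) = -a_1\ln a_1 = 1/\sfe$ pins down $a_1 = 1/\sfe$. If $K \ge 2$, the recursive equality would require $g(\mathbf{s}') = W(b_1) = -b_1\ln b_1$, but the IH applied to $\mathbf{s}'$ with lower bound $a_2 > b_1 \ge 1/\sfe$ gives $g(\mathbf{s}') \le W(a_2) < W(b_1)$ (strict because $W$ is strictly decreasing on $[1/\sfe, 1]$)---a contradiction. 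Hence $K = 1$, and the empty remainder forces $W(b_1) = 0$, i.e.\ $b_1 = 1$, yielding $\mathbf{s}^* = (1/\sfe,\,1)$ as the unique maximizer. The main obstacle I anticipate is guessing the correct envelope $W$ (its piecewise form is not obvious without first solving the problem) and pushing the strictness through the boundary behavior at $b_1 = 1/\sfe$ and $a_1 = b_1$ so that the uniqueness argument is airtight.
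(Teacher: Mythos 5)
Your proof is correct, and it takes a genuinely different route from the paper's. The paper argues by explicit case analysis on the number of intervals $K$: it solves $K=1$ directly, works through $K=2$ by fixing $b_2=1$, optimizing over $a_2$ in two sub-cases ($b_1<1/\sfe$ and $b_1\ge 1/\sfe$), and then asserts that ``a similar analysis'' yields $g^*_{K+1}(\mathbf{s})\le g^*_K(\mathbf{s})$ for general $K$ without carrying out that induction. Your argument instead exploits the recursion $g(\mathbf{s})=a_1\ln(b_1/a_1)+(a_1/b_1)\,g(\mathbf{s}')$ together with the value function $W$, reducing everything to a single one-step inequality $a_1\ln(b_1/a_1)+(a_1/b_1)W(b_1)\le W(a_1)$ that is verified cleanly in both regimes (the cancellation of $\ln b_1$ when $b_1\ge 1/\sfe$, and the elementary bound $-r\ln r\le 1-r$ when $b_1<1/\sfe$). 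I checked the details: the envelope $W$ is continuous, nonnegative, non-increasing, strictly decreasing on $[1/\sfe,1]$, vanishes only at $c=1$, and the strictness you track (Case~2 strict for all $r<1$; $-x\ln x<1/\sfe$ for $x\ne 1/\sfe$; $W(a_2)<W(b_1)$ when $a_2>b_1\ge 1/\sfe$) is exactly what pins down $K=1$, $b_1=1$, $a_1=1/\sfe$ as the unique equality case. What your approach buys is a uniform treatment of all $K$ in one induction and a fully rigorous uniqueness argument; the paper's approach is more concrete for small $K$ but leaves the general step (and, strictly speaking, the strictness needed for uniqueness at general $K$) as a claim. One cosmetic point: in Case~2 you write ``using $b_1\le 1/\sfe$'' where that case is $b_1<1/\sfe$; the boundary $b_1=1/\sfe$ is covered by Case~1 anyway, so nothing breaks.
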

The proof of Lemma~\ref{lem:26-a} can be found in Appendix~\ref{app:lem-s}.

\subsection{Proof of Lemma~\ref{lem:25-a}}\label{sec:31-a}
Consider non-linear variations on the variables $u(t)$, $v(t)$, and $w(t)$ satisfying boundary conditions as follows:
\begin{align}
&u(\ep,t)~\Big|_{\ep=0}=u(t), ~~\frac{\partial u(\ep,t)}{\partial \ep}~\Big|_{\ep=0}=\vph(t); \label{eqn:24-a}\\
&v(\ep,t)~\Big|_{\ep=0}=v(t), ~~\frac{\partial v(\ep,t)}{\partial \ep}~\Big|_{\ep=0}=\psi(t);\label{eqn:24-b}\\
&w(\ep,t)~\Big|_{\ep=0}=w(t), ~~\frac{\partial w(\ep,t)}{\partial \ep}~\Big|_{\ep=0}=\eta(t);\label{eqn:24-c}
\end{align}
where $\vph, \psi$ and $\eta$ are three independent variational directions imposed on $u$, $v$, and $w$, respectively.\footnote{Note that we skip introducing variations to $\mu_1(t)$ and $\mu_2(t)$ since  they only lead to the two equality constraints~\eqref{eqn:7-25-b} and~\eqref{eqn:7-25-c}; see {Chapter 7 of the book~\cite{olver2021calculus}}.}  Observe that $u(0)=0$ implies that ${\vph}(t)=0$ at $t=0$, \ie $\vph(0)=0$.  Define

\begin{align*}
h(\ep):=\int_0^1 \sd t~\cdot \bP{w^2(\ep,t)\cdot t+\mu_1(t) \cdot \bb{u(\ep,t)+w^2(\ep,t) \cdot t+v^2(\ep,t)-1}+\mu_2 (t)\cdot \bb{\frac{\partial u(\ep,t)}{\partial t} -w^2(\ep,t)}}
\end{align*}

Observe that 
\begin{align*}
\frac{\partial \sbp{w^2(\ep,t)\cdot t}}{\partial \ep}=2 w(\ep,t) \cdot t \cdot \frac{\partial w(\ep,t)}{\partial \ep},
\end{align*}
which implies that
\begin{align*}
\frac{\partial \sbp{w^2(\ep,t)\cdot t}}{\partial \ep}~\Big|_{\ep=0}=2 w(\ep,t) \cdot t \cdot \frac{\partial w(\ep,t)}{\partial \ep}~\Big|_{\ep=0}=2 \cdot w(t) \cdot t \cdot \eta(t),
\end{align*}
which is due to the boundary conditions imposed on the variation $w(\ep,t)$ of $w(t)$, as suggested in~\eqref{eqn:24-c}. 
As a result, 
\begin{align*}
h'(\ep)~\Big|_{\ep=0}=\int_0^1 \sd t & \left( 2 w(t) \cdot t \cdot \eta(t)+ \mu_1 (t)\bb{\vph(t)+2w(t) \cdot t \cdot \eta(t)+2v(t) \cdot \psi(t)} \right. \\
&\left. +\mu_2(t)\bb{\dot{\vph}(t)-2w(t)\cdot \eta(t)}\right).
\end{align*}

For ease of notation, we omit the variable \( t \) in each function above from now on. By default, the terms \( u \), \( v \), and \( w \) refer to the single-variable functions \( u(t) \), \( v(t) \), and \( w(t) \), respectively, rather than their variations. As a result,
\begin{align*}
h'(0)&=\int_0^1 \sd t~\cdot \bp{2 w \cdot t \cdot \eta+\mu_1 \bb{\vph+2w \cdot t \cdot \eta+2v \cdot \psi}+\mu_2\bb{\dot{\vph}-2w\cdot \eta}}\\
&=\mu_2 (1) \vph(1)-\mu_2 (0) \vph(0)+\int_0^1 \sd t~\cdot \bp{2 w t \cdot \eta+\mu_1 \bb{\vph+2wt \cdot \eta+2v \cdot \psi}-\bb{\dot{\mu}_2\vph+\mu_2\cdot 2w\eta}}\\
&=\mu_2 (1) \vph(1)+\int_0^1 \sd t~\cdot \bp{ \vph \cdot \bb{\mu_1- \dot{\mu}_2}+ 2\psi \cdot (v \mu_1)+\eta \cdot 2 w\bb{t(1+\mu_1)-\mu_2} },~~\sbp{\mbox{since $\vph(0)=0$}}.
\end{align*}

A necessary condition that any optimizer of the variance instance~\eqref{eqn:7-25-a} must satisfy is that \( h'(0) = 0 \) for any variations imposed on \( u \), \( v \), and \( w \), subject to the boundary conditions shown in~\eqref{eqn:24-a}-\eqref{eqn:24-c}. Note that \( \vph \), \( \psi \), and \( \eta \) represent three independent directions. By the Fundamental Lemma of the Calculus of Variations~\cite{lanczos2012variational,gelfand2000calculus}, we establish our claim.

\section{Conclusion and Future Directions}
In this paper, we propose a variational approach as an alternative for solving factor-revealing and policy-revealing LPs. Compared with existing methods, our approach is distinguished by its generality: once we reformulate the LP in the limit, as its size approaches infinity, as a variational instance, we can then invoke methods from the extensive toolkit of techniques developed for variational calculus to solve it.

Our work opens several new directions. One promising avenue is to apply the proposed approach to \emph{analytically} solve the factor-revealing LPs in~\cite{mahdian2011online}, which were introduced to analyze \rank for online matching under random arrival order. We expect to reformulate that factor-revealing LP as a \emph{multi-dimensional} variational instance. Fortunately, as mentioned, variational calculus is a well-established field with a vast array of techniques developed for various complex forms of variational instances, including multi-dimensional cases (see, e.g., Chapter 11 of~\cite{olver2021calculus}). We are confident that the variational approach can successfully solve the problem.

\newpage
\bibliographystyle{alpha} 
\bibliography{EC_21}
 
\clearpage
\appendix

\section{Proof of Theorem~\ref{thm:re-bal}} \label{app:bal}
The algorithm \bal was first introduced by~\cite{kalyanasundaram2000optimal} for the Online $b$-Matching problem, which proves that $\bal$ achieves \anhai{optimal} competitiveness of $1-1/\sfe$ when $b \to \infty$. For completeness, we reiterate it in Algorithm~\ref{alg:bal}.

\begin{algorithm}[th!] 
\caption{\bal for Online $b$-Matching~\cite{kalyanasundaram2000optimal}.}\label{alg:bal}
\DontPrintSemicolon
\tcc{\bluee{Input is a bipartite graph $G=(U,V,E)$. Each $u$ has a matching capacity of $b \in \mathbb{Z}_{\ge 1}$, and the vertices of $V$ are revealed sequentially in an adversarial order. Output is matching over $G$}.}
When an online vertex $v \in V$ arrives: Match $v$ to an available neighbor $u^*$ of $v$, if any, where $u^*$ has the largest remaining capacity so far.
\end{algorithm}

\xhdr{Viewing Online $b$-Matching as an Adwords Instance}. Recall that for Online $b$-Matching, we have an input graph $G=(U,V,E)$ such that each $u$ has a matching capacity of $b$. Throughout this section, we refer to vertices of $U$ as bidders and those of $V$ as queries. We can interpret Online $b$-Matching as an Adwords problem such that each bidder $u$ has a unit budget, while every bid (edge weight) is equal to $1/b \ll 1$. Under this context, $\bal$ always assigns an arriving query $v$ to a neighbor $u^*$ such that $u^*$ has the largest unused fraction of budget when $v$ arrives.

\begin{proof}[Proof of Theorem~\ref{thm:re-bal}]
Assume WLOG that $\opt=|U|=n$.\footnote{In the work~\cite{mehta2007adwords}, it has been explained in detail why we can assume WLOG that $\opt$ achieves a total reward of $n$ in the worst-case scenario, \ie $\opt$ exhausts the unit budget of every bidder.} In other words, we assume that in the optimal algorithm, each bidder exhausts its unit budget. Suppose we run $\bal$ and group all of the $n$ bidders into $N+1$ groups based on the fraction of \anhai{the} used budget in the end with $N \gg n$. Specifically, we say a bidder $u \in U$ falls  group $i \in [N]:=\{1,2,\ldots,N\}$ if it uses a fraction of budget between $(i-1)/N$ and $i/N$, \ie $\rho_u \in ((i-1)/N, i/N)$, where $\rho_u$ denotes the fraction of \anhai{the} budget spent by bidder $u$ in the end, and a bidder $u$ falls group $i=N+1$ if it uses up all budget with $\rho_u=1$. Here we \anhai{simply} assume WLOG that no bidder uses a fraction of budget exactly equal to $i/N$ for some $1 \le i<N$.\footnote{In fact, we can always make it by choosing an appropriate large enough value of $N$. For example, consider a case when a bidder $u$ has $\rho_u=k/N$ for some integer $k \in \{1,2,\ldots,N-1\}$. We can break it by resetting $N' \gets N+P$, where $P$ is any prime number with $P>N$. We can verify that $\rho'_u=k'/N'$, with $k'=k+k P/N$ being not an integer.} Let $\alp_i$ denote the number of bidders falling in group $i \in [N+1]$. Thus, $\sum_{i \in [N+1]} \alp_i=n$. The performance of $\bal$ (the total rewards) can be expressed  as
\begin{align}\label{eqn:9-20-c}
\bal &=\sum_{i \in [N]} \alp_i \cdot (i/N)+\alp_{N+1}-\ep \nonumber\\ 
&=\sum_{i \in [N]} \alp_i \cdot (i/N)+n-\sum_{i \in [N]} \alp_i -\ep=n-\bb{\sum_{i \in [N]}\alp_i \cdot \bp{1-\frac{i}{N}}+\ep},
\end{align}
where $\ep \in [0, n/N]$ is to offset the part of rewards overcounted for \bal: Since each bidder in whatever group may have been overcounted by at most $1/N$ rewards, the total of overcounting rewards should be no more than $n/N$. Suppose we split the whole unit budget of each bidder into $N$ slabs such that each slab $j \in [N]$ represents the fraction interval between $(j-1)/N$ and $j/N$. We assume each bidder first spends slab $j=1$, followed by $j=2,3,\ldots,N$. Let $\beta_j$ denote  the total amount of money spent by all the  $n$ bidders in the slab $j$ for each $j \in [N]$. As a result,
\begin{align}\label{ineq:9-20-a}
\beta_j=\sum_{i =j}^{N+1} \alp_i/N-\ep_j=\frac{1}{N} \bp{n-\sum_{ i<j} \alp_i}-\ep_j,
\end{align}
where $\ep_j \in [0,\alp_{j}/N]$ denotes the amount of overcounting rewards that is solely attributable to potential bidders in group $i=j$. A key observation, as shown in Lemma~\ref{lem:9-20-a}, is that

\begin{align}\label{ineq:9-20-b}
\sum_{j \le p} \beta_j \ge \sum_{i \le p} \alp_i, \forall p \in [N].
\end{align}
Substituting Equality~\eqref{ineq:9-20-a} into Inequality~\eqref{ineq:9-20-b}, we have for any $p \in [N]$, 
\begin{align}
&~\sum_{j \le p} \beta_j =\sum_{j \le p} \bP{\frac{1}{N} \bp{n-\sum_{i<j} \alp_i}-\ep_j}=
\frac{p \cdot n}{N}-\sum_{j \le p}\sum_{ i<j}\alp_j/N-\sum_{j \le p} \ep_j \ge \sum_{i \le p} \alp_i \\
\Leftrightarrow&~\frac{p \cdot n}{N}-\sum_{i \le p}\alp_i (p-i)/N-\sum_{j \le p} \ep_j \ge \sum_{i \le p} \alp_i \\
\Leftrightarrow &~ \sum_{i \le p} \alp_i \bp{1+\frac{p-i}{N}} \le\frac{p \cdot n}{N}-\sum_{j \le p} \ep_j. 
\end{align}
Now, wrapping up all the analysis above, the worst-case performance of $\bal$ can be formulated as the below LP:
\begin{align}\label{lp:9-20-a}
\min &~~n-\bb{\sum_{i \in [N]}\alp_i \cdot \bp{1-\frac{i}{N}}+\ep}\\
&  \sum_{i \le p} \alp_i \bp{1+\frac{p-i}{N}} \le\frac{p \cdot n}{N}-\sum_{j \le p} \ep_j, \forall p \in [N]; \nonumber\\
& \alp_i \in [0, n],\forall i \in [N]. \nonumber
\end{align}
Note that: (1) In the objective, the term $\ep \in [0,n/N]$ and in the constraint, the term $\sum_{j \le p}\ep_j \le \sum_{j \in [N]} \ep_j \le \sum_{j \in [N]} \alp_j/N=n/N$. Thus, by taking $N \gg n$, both terms can be ignored. (2) Since $\OPT=n$, the corresponding competitiveness achieved by \bal can be captured as the objective divided by $n$. Let $x_i:=\alp_i/n \in [0,1]$. We can characterize  the final competitiveness of $\bal$ using the below LP:
\begin{align}\label{lp:9-20-b}
\min &~~1-\bb{\sum_{i \in [N]} x_i \cdot \bp{1-\frac{i}{N}}}\\
&  \sum_{i \le p} x_i \bp{1+\frac{p-i}{N}} \le\frac{p }{N}, \forall p \in [N]; \nonumber\\
& \x_i \in [0, 1],\forall i \in [N]. \nonumber
\end{align}
Note that the above \LP-\eqref{lp:9-20-b} has an optimal value equal to $1-\val(\LP_B(N))$, where $\LP_B(N)$ is the factor-revealing LP for \bal in LP-\eqref{lp:bal}.  Thus, we establish our claim.
\end{proof}

\begin{lemma}\label{lem:9-20-a}
$\sum_{j \le p} \beta_j \ge \sum_{i \le p} \alp_i, \forall p \in [N]$.
\end{lemma}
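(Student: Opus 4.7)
The plan is to prove the inequality via a charging argument from $\opt$ to $\bal$, leveraging the defining rule of $\bal$ that an arriving query is always assigned to the neighbor with the largest remaining budget. First I would reinterpret the two sides semantically: the left-hand side $\sum_{j \le p} \beta_j$ is exactly the total amount of money $\bal$ spends across all bidders while each bidder's cumulative spending is still at most $p/N$; the right-hand side $\sum_{i \le p} \alp_i$ counts the bidders whose final $\bal$ spending lies strictly below $p/N$. Since we have assumed $\opt = n$ with every bidder exhausted by $\opt$, each bidder $u$ in groups $1,\ldots,p$ receives exactly $1$ unit of $\opt$ reward, so the right-hand side also equals the total $\opt$ reward collected on these ``under-utilized'' bidders. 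The task thus reduces to showing that the $\bal$ spending in the first $p$ slabs dominates the $\opt$ reward on bidders in groups $\le p$.

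For the charging step, fix any bidder $u$ in some group $i \le p$ and consider the set $Q_u^{\opt}$ of queries that $\opt$ assigns to $u$, whose total bid mass equals $1$. For each $v \in Q_u^{\opt}$, the bidder $u$ is a neighbor of $v$ that remains available throughout (since $u$'s final spending is strictly below $1$), so $\bal$ assigns $v$ to some neighbor $u' = u'(v)$ (possibly $u$ itself). By the $\bal$ rule, $u'$'s cumulative spending immediately before $v$ arrives is at most $u$'s cumulative spending at that same time, which in turn is at most $u$'s final spending, which is strictly less than $p/N$. Consequently, the bid of $v$ (of size $1/b$) is spent on $u'$ at a moment when $u'$'s cumulative spending lies below $p/N$, so this bid contributes to one of the first $p$ slabs of $u'$ (up to a small boundary effect discussed below). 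Summing over $v \in Q_u^{\opt}$ and over all bidders $u$ in groups $\le p$ then deposits a total mass of $\sum_{i \le p} \alp_i$ into the first $p$ slabs of $\bal$, yielding $\sum_{j \le p} \beta_j \ge \sum_{i \le p} \alp_i$ as desired.

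The main technical obstacle is the single-bid straddle: the bid of $v$ could partially land in slab $p$ and partially in slab $p+1$ if $u'$'s cumulative spending crosses $p/N$ precisely during the processing of $v$. However, such a crossing can occur for at most one query per bidder, and each straddle loses at most $1/b$ of bid mass, so the aggregate loss across all $n$ bidders is at most $n/b$. In the asymptotic regime $b \to \infty$ considered in Theorem~\ref{thm:re-bal} (with $b \gg N$ so that $1/b \ll 1/N$), this loss is negligible and can be absorbed into the $\ep_j$ slack terms already present in~\eqref{ineq:9-20-a}. Apart from this boundary bookkeeping, the proof is essentially the one-line observation that $\bal$'s greedy rule forces every $\opt$-assignment to an under-utilized bidder to be mirrored by a $\bal$-assignment into an early slab of some (possibly different) bidder.
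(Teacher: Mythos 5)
Your proof is correct and follows essentially the same charging argument as the paper: since $\bal$ assigns each arriving query to the least-spent available neighbor, every $\opt$-assignment to a bidder in groups $\le p$ is mirrored by a $\bal$-assignment whose money lands in a slab $j \le p$, and summing over all such queries yields the inequality. The only difference is that you explicitly treat the slab-boundary straddle (a loss of at most $n/b$, negligible as $b \to \infty$ and absorbable into the $\ep_j$ terms), a bookkeeping point the paper's proof glosses over; this is a valid refinement rather than a divergence in approach.
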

\begin{proof}
Consider a given $p \in [N]$. Note that for the group $i=p$, denoted by $\cU_p$,  the optimal policy $\opt$ exhausts the unit budget for every bidder $u \in \cU_p$. Let $\cV_p$ denote the set of queries assigned by $\opt$ to any bidder  $u \in \cU_p$.  Consider a given $v \in \cV_p$. Note that at the time when $v$ arrives, $\opt$ makes the choice of assigning $v$ to some $u \in \cU_p$, where $u$ spends a fraction of budget less than $p/N$ (even by the end of $\bal$). Therefore, at the same time when $v$ arrives, $\bal$ will assign $v$ to some bidder $u$ that spends a \anhai{fraction} of budget no more than $p/N$, due to the nature of $\bal$. Consequently, the money obtained by $\opt$ from matching any queries in $\cV_p$ will flow to some slab $j$ with $ j\le p$ for $\bal$. More generally, the money obtained by $\opt$ from matching any queries in $\cup_{1\le i \le p}\cV_i$ will flow to some slab $j$ with $ j\le p$ for $\bal$. Observe that the total money obtained by $\opt$ from matching all queries in $\cup_{1\le i \le p}\cV_i$ is equal to  $\sum_{1\le i \le p}|\cU_i|=\sum_{1\le i \le p} \alp_i$ (since $\opt$ exhausts the unit budget for every bidder). This should be no more than the sum of money earned by $\bal$ over all slabs $j$ with $1\le j \le p$. Therefore, our claim is established.
\end{proof}

\section{Proof of Theorem~\ref{thm:rank}}\label{app:rank}
The classical algorithm \rank was first introduced by~\cite{kvv}, which proved that \rank achieves an optimal competitiveness of $1-1/\sfe$ for the Online Matching problem (under adversarial order).  For completeness, we state it in Algorithm~\ref{alg:rank}.

\begin{algorithm}[ht!] 
\caption{\rank for Online Matching~\cite{kvv}.}\label{alg:rank}
\DontPrintSemicolon
\tcc{\bluee{Input is a bipartite graph $G=(U,V,E)$, which is unknown to the algorithm. The vertices of $V$ are revealed sequentially in an adversarial order. Output is a matching $M$ of $G$}.}
\tbf{Offline Phase}: Choose a random permutation $\pi$ over $U:=[n]=\{1,2,\ldots,n\}$ uniformly at random.\;
\tbf{Online Phase}: When an online vertex $v \in V$ arrives: If there are available neighbors of $v$, add the edge $(u,v)$ to $M$, where $u$ has the smallest value under $\pi$ among these neighbors.
\end{algorithm}

The book~\cite{mehta2012online} (Chapter 3) offers a refined combinatorial analysis for \rank. The main idea is as follows: Assume, without loss of generality (WLOG), that there exists a perfect matching in the input graph $G$ such that $\OPT = n$. For each position $i \in [n]$, let $x_i \in [0,1]$ denote the probability that a match event occurs in \rank. Thus, the expected number of matches by \rank can be captured as $\sum_{i \in [n]} x_i$. As a result, the competitiveness achieved by \rank is equal to $\sum_{i \in [n]} x_i/n$, the exact objective of the factor-revealing $\LP_R(n)$ as shown in~\LP-\eqref{lp:rank}. The work~\cite{mehta2012online} justified the validity of the constraints in $\LP_R(n)$ for every $i \in [n]$. Therefore, the minimization program of~\LP-\eqref{lp:rank} offers a valid lower bound on the competitiveness achieved by \rank. We are interested in the asymptotic behavior of the optimal value of $\LP_R(n)$ as $n$ approaches infinity.

\subsection{Formulating $\LP_R(n)$ as a Variational-Calculus (VC) Instance} 

For each $i \in [n]$, let $g(i/n):=x_i$,
where $g(t)$ is a function with domain and range both on $[0,1]$. Observe that for the objective of $\LP_R(n)$ in LP-\eqref{lp:rank}, 
\[
\frac{1}{n} \sum_{i\in[n]}x_i =\frac{1}{n} \sum_{i\in[n]} g(i/n) \longrightarrow \int_0^1 g(t)~ \sd t, ~~\sbp{\mbox{when $n\to \infty$}}. 
\]

Additionally, for the constraint of $\LP_R(n)$ in LP-\eqref{lp:rank},
\begin{align*}
 &x_i+\frac{1}{n} \sum_{1 \le j \le i} x_j \ge 1,  \forall i \in [n]  \Leftrightarrow
g(i/n) +\frac{1}{n}\sum_{1 \le j \le i} g(j/n) \ge 1,  i=\lfloor t \cdot n \rfloor, \forall t \in [0,1].\\
\Leftrightarrow &  g(t)+\int_0^t g(z)\sd z \ge 1, \forall t \in [0,1]. ~~\sbp{\mbox{When $n \to \infty$}.}
 \end{align*}
 Wrapping up all the analysis above, we have that when $n \to \infty$, $\LP_R(n=\infty)$ becomes equivalent to the variational instance~\eqref{vc:rank}.

\subsection{A Simple Ordinary Differential Equation (ODE)-Based Approach to Program~\eqref{vc:rank}}
Similar to the case of \bal, we can apply a simple Ordinary Differential Equation (ODE)-based approach to solving Program~\eqref{vc:rank} optimally. Let $u(t):=\int_0^t g(z) \sd z$ for each $t \in [0,1]$. Thus, Program~\eqref{vc:rank} can be re-stated as:
\begin{align}\label{eqn:8-17-b}
\min u(1):~~u'(t)+u(t) \ge 1, \forall t\in [0,1]; u(0)=0; 0\le u'(t)\le 1, \forall t \in [0,1].
\end{align}

\begin{theorem}\label{thm:rank-a}
An optimal solution to Program~\eqref{eqn:8-17-b} is that  $u^*(t)=1-\sfe^{-t}$ for $t \in [0,1]$.
\end{theorem}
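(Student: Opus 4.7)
The plan is to mirror the approach used in the proof of Theorem~\ref{thm:bal-a}, since Program~\eqref{eqn:8-17-b} has essentially the same structure as the reduced form of the \bal instance: a minimization over the endpoint value subject to a first-order linear differential inequality and the initial condition $u(0)=0$.

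First I would verify that $u^*(t)=1-\sfe^{-t}$ is feasible: we have $u^*(0)=0$, $(u^*)'(t)=\sfe^{-t}\in[0,1]$ for $t\in[0,1]$ (and monotonically decreasing so the bound constraints hold), and $(u^*)'(t)+u^*(t)=\sfe^{-t}+(1-\sfe^{-t})=1$, which saturates the main inequality constraint. This immediately gives the upper bound $\val\le u^*(1)=1-1/\sfe$.

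For the matching lower bound, I would pick an arbitrary feasible $u$ to Program~\eqref{eqn:8-17-b} and set $h(t):=u(t)-u^*(t)$, noting $h(0)=0$. The feasibility of $u$ gives $u'(t)+u(t)\ge 1=(u^*)'(t)+u^*(t)$, hence
\begin{align*}
h'(t)+h(t)\ge 0,\quad\forall t\in[0,1] \quad\Longrightarrow\quad \bigl(h(t)\cdot\sfe^{t}\bigr)'\ge 0,\quad\forall t\in[0,1].
\end{align*}
Therefore $h(t)\sfe^{t}$ is non-decreasing on $[0,1]$, and since $h(0)\sfe^{0}=0$, we obtain $h(t)\ge 0$ for all $t\in[0,1]$. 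Specializing at $t=1$ yields $u(1)\ge u^*(1)=1-1/\sfe$, completing the lower bound and hence the optimality claim.

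I do not anticipate a significant obstacle here: the bound constraints $0\le u'(t)\le 1$ are needed only to confirm feasibility of $u^*$ and play no role in the lower-bound argument, which uses only $u(0)=0$ and the single inequality $u'(t)+u(t)\ge 1$. The one subtle point worth a line of justification is that the reformulation of Program~\eqref{vc:rank} into~\eqref{eqn:8-17-b} through $u(t):=\int_0^t g(z)\,\sd z$ requires $u$ to be absolutely continuous with a bounded derivative, which is consistent with the implicit regularity assumed when applying the Newton--Leibniz formula to $(h(t)\sfe^t)'$; this is analogous to the remark made after the reformulation leading to~\eqref{vc:sp-b}.
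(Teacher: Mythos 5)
Your proposal is correct and follows essentially the same route as the paper's own proof: verify feasibility of $u^*(t)=1-\sfe^{-t}$ by saturating the constraint, then show $u(t)\ge u^*(t)$ for any feasible $u$ via the integrating-factor argument on $h(t)=u(t)-u^*(t)$ with $(h(t)\cdot\sfe^{t})'\ge 0$ and $h(0)=0$. No gaps; your added remark on absolute continuity is a reasonable (if optional) refinement beyond what the paper states.
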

Note that Theorem~\ref{thm:rank-a} suggests that the optimal value of $\LP_R(n=\infty)$ should be equal to $u^*(1)=1-1/\sfe$, which establishes Theorem~\ref{thm:rank}.

\begin{proof}
Let $u^*(t)=1-\sfe^{-t}$ with $t\in [0,1]$ be the unique solution that satisfies the first constraint of Program~\eqref{eqn:8-17-b} with equality, i.e., $u'(t)+u(t) =1$ for all $t \in [0,1]$, and the initial condition $u(0)=0$. We can verify that $u^*(t)$ also satisfies the last constraint, making it feasible for Program~\eqref{eqn:8-17-b}. Now, we will show that for any feasible solution $u(t)$ to Program~\eqref{eqn:8-17-b}, $u(t) \geq u^*(t)$ for all $t \in [0,1]$. This implies that $u(1) \geq u^*(1)=1-1/\sfe$.

Let $h(t):=u(t)-u^*(t)$. Due to the feasibility of $u$ and $u^*$, we have $h(0)=0$ and $h'(t)+h(t) \geq 0$ for all $t \in [0,1]$. Therefore, $(h(t) \cdot \sfe^{t})' \geq 0$ for all $t \in [0,1]$.  This means that $f(t):=h(t) \cdot \sfe^t$ is non-decreasing over $t \in [0,1].$ Since $f(0)=0$, it follows that $f(t) \geq 0$ for all $t \in [0,1]$.  Hence, $h(t)=u(t)-u^*(t) \geq 0$ for all $t \in [0,1]$, implying that $u(t) \geq u^*(t)$ throughout $t \in [0,1]$.
\end{proof}

\section{Proof of Lemma~\ref{lem:26-a}}\label{app:lem-s}
\begin{proof}
Consider the scenario $K=1$. In this case, $\mathbf{s}=(a_1, b_1)$ with $0<a_1 < b_1 \le 1$, and 
\[
g(\mathbf{s})=a_1 \ln(b_1/a_1).
\]
We can verify that $g(\mathbf{s})$ gets maximized at $\mathbf{s}=\mathbf{s}^*$ with $b_1=1$ and $a_1=1/\sfe$. 

Now, we focus on the case $K=2$. In this case, we need to solve the following maximization program:
\[
\max g(\mathbf{s})=a_1 \ln(b_1/a_1)+(a_1/b_1) a_2 \ln(b_2/a_2):~~ 0<a_1 <b_1< a_2 < b_2 \le 1.
\]
Clearly, for the maximizer of $\mathbf{s}^*=(a^*_1, b^*_1, a^*_2, b^*_2)$,  we have $b^*_2=1$. Thus, the target function becomes
\begin{align}\label{eqn:27-a}
 g(\mathbf{s})=a_1 \ln(b_1/a_1)+(a_1/b_1) a_2 \ln (1/a_2).
\end{align}
Suppose $a_1$ and $b_1$ are fixed, and now we determine a maximizer of $a_2$ subject to $b_1 <a_2< 1$. Consider the following two cases.

\xhdr{Case 1}. $b_1 < 1/\sfe$. In this case, we claim that $a_2^*=1/\sfe$. As a result, the target function $g(\mathbf{s})$  becomes
\[
g(\mathbf{s})=a_1 \ln(b_1/a_1)+(a_1/b_1) \cdot (1/\sfe)=a_1 \cdot \bp{\ln b_1+ \frac{1}{ \sfe \cdot b_1} }+a_1 \cdot \ln (1/a_1).
\]
For any given $a_1>0$, we can verify that the function $\ln b_1+ \frac{1}{ \sfe \cdot b_1}$ is decreasing when $b_1 \in (a_1, 1/\sfe)$. Note that when $b_1=a_1$, $g(\mathbf{s})=1/\sfe$. As a result, we claim that $g(\mathbf{s})<1/\sfe$.

\xhdr{Case 2}. $b_1  \ge 1/\sfe$. In this case, we claim that the target function $g(\mathbf{s})$ in~\eqref{eqn:27-a} is a strictly decreasing function of $a_2 \in (b_1,1)$ given $a_1$ and $b_1$ being fixed. Note that when $a_2=b_1$, $g(\mathbf{s})=a_1\ln(1/a_1)$. Thus,
\[
g(\mathbf{s})<a_1\ln(1/a_1) \le 1/\sfe.
\]
Wrapping up the analysis in the two cases above, we find that $g^*_2(\mathbf{s})<1/\sfe=g^*_1(\mathbf{s})$, where $g^*_K(\mathbf{s})$ with $K \in \{1,2\}$ denotes the optimal value of $g(\mathbf{s})$ when $\mathbf{s}$ is restricted to have $2K$ disjoint points over $(0,1]$, representing $K$ disjoint intervals. By applying a similar analysis, we can prove that for any $K \ge 2$, the corresponding optimal values satisfy $g^*_{K+1}(\mathbf{s})\le g^*_{K}(\mathbf{s})$.  Thus, we establish the claim.
\end{proof}

\end{document}